\newtheorem{theorem}{Theorem}
\newtheorem{lemma}[theorem]{Lemma}
\newtheorem{cor}[theorem]{Corollary}
\theoremstyle{definition}
\newtheorem*{definition}{Definition}
\newtheorem{example}{Example}
\newcommand{\twofig}{.45}                             % figure width for two figures
\newcommand{\twofigb}{.4}                             % figure width for two figures
\newcommand{\Meas}[1][\R^d]{\mathcal M\left(#1\right)}% class of all finite measures
\newcommand{\Atom}[1][\R^d]{\mathcal A\left(#1\right)}% class of all atomic measures with finitely many atoms
\newcommand{\D}{D}																		% halfspace depth
\newcommand{\R}{\mathbb R}														% reals
\newcommand{\tr}{^\mathsf{T}}													% transposition
\newcommand{\Sph}[1][d]{\mathbb{S}^{#1-1}}						% unit sphere
\newcommand{\SphO}{\mathbb{S}^{1}}										% unit sphere in R^2
\DeclareMathOperator{\intrOp}{int}									  % interior
\DeclareMathOperator{\clOp}{cl}   									  % closure
\DeclareMathOperator{\bdOp}{bd}												% boundary
\DeclareMathOperator{\relintOp}{relint}								% relative interior
\DeclareMathOperator{\relbdOp}{relbd} 								% relative boundary
\DeclareMathOperator{\relclOp}{relcl}									% relative closure
\DeclareMathOperator{\affOp}{aff}
\newcommand{\intr}[1]{\intrOp\left(#1\right)}
\newcommand{\cl}[1]{\clOp\left(#1\right)}
\newcommand{\bd}[1]{\bdOp\left(#1\right)}
\newcommand{\relint}[1]{\relintOp\left(#1\right)}
\newcommand{\relcl}[1]{\relclOp\left(#1\right)}
\newcommand{\relbd}[1]{\relbdOp\left(#1\right)}
\newcommand{\half}{\mathcal H}												% closed halfspaces
\newcommand{\halfo}{\half^\circ}                      % open halfspaces
\newcommand{\flag}{\mathcal F}												% flag halfspaces
\newcommand{\aff}[1]{\affOp\left(#1\right)}                    %affine hull
\newcommand{\medianOp}{D^*}														% median set
\newcommand{\median}[1][\mu]{\medianOp\left(#1\right)}
\newcommand{\Damu}[1][\mu]{D_\alpha(#1)}
\newcommand{\am}[1][\mu]{\alpha^*(#1)}
\newcommand{\depth}[2]{\D\left(#1;#2\right)}
\newcommand{\pkg}[1]{{\normalfont\fontseries{b}\selectfont #1}}
\let\proglang=\textsf
\title[Another look at halfspace depth: Flag halfspaces]{Another look at halfspace depth: \\ Flag halfspaces with applications}
\author{Du\v{s}an Pokorn\'{y}}
\author{Petra Laketa}
\author{Stanislav Nagy}%$^{1}$}
\email{nagy@karlin.mff.cuni.cz}
\address{%\hspace{-1em}%$^1$
	Charles University,
	Faculty of Mathematics and Physics,
	% Department of Probability and Math. Statistics,
	Prague, Czech Republic
}
\subjclass{62H05, 62G35}
\keywords{flag halfspace; halfspace depth; halfspace median; Tukey depth}
\date{\today}
\begin{document}

\begin{abstract}
The halfspace depth is a well studied tool of nonparametric statistics in multivariate spaces, naturally inducing a multivariate generalisation of quantiles. The halfspace depth of a point with respect to a measure is defined as the infimum mass of closed halfspaces that contain the given point. In general, a closed halfspace that attains that infimum does not have to exist. We introduce a flag halfspace --- an intermediary between a closed halfspace and its interior. We demonstrate that the halfspace depth can be equivalently formulated also in terms of flag halfspaces, and that there always exists a flag halfspace whose boundary passes through any given point $x$, and has mass exactly equal to the halfspace depth of $x$. Flag halfspaces allow us to derive theoretical results regarding the halfspace depth without the need to differentiate absolutely continuous measures from measures containing atoms, as was frequently done previously. The notion of flag halfspaces is used to state results on the dimensionality of the halfspace median set for random samples. We prove that under mild conditions, the dimension of the sample halfspace median set of $d$-variate data cannot be $d-1$, and that for $d=2$ the sample halfspace median set must be either a two-dimensional convex polygon, or a data point. The latter result guarantees that the computational algorithm for the sample halfspace median form the \proglang{R} package \pkg{TukeyRegion} is exact also in the case when the median set is less-than-full-dimensional in dimension $d=2$.
\end{abstract}

\maketitle

\section{Introduction: Halfspace depth and its median}	% \label{section:introduction}

Denote by $\Meas$ the set of all finite Borel measures on the Euclidean space $\R^d$. The \emph{halfspace} (or \emph{Tukey}) \emph{depth} of $x \in \R^d$ with respect to (w.r.t.) $\mu \in \Meas$ is defined as\footnote{We consider the halfspace depth w.r.t. finite measures $\mu$, that is when $\mu(\R^d) < \infty$. Compared to the usual setup of probability measures, this extension is minor, and made only for notational convenience. All our results could be considered also for probability measures only, with obvious modifications.}
	\begin{equation}	\label{halfspace depth}
	\depth{x}{\mu} = \inf \left\{ \mu(H) \colon H \in \half(x) \right\},	
	\end{equation}
where $\half(x)$ is the collection of closed halfspaces in $\R^d$ that contain $x$ on their boundary. The halfspace depth quantifies the centrality of $x$ w.r.t. the mass of $\mu$. That is quite useful in nonparametric statistics, as it allows us to rank sample points according to their depth, from the central to the peripheral ones. As such, the depth enables the introduction of rankings, orderings, and quantile-like inference to multivariate datasets \cite{Donoho_Gasko1992, Tukey1975, Zuo_Serfling2000}. The upper level sets of the halfspace depth of $\mu$, given for $\alpha \geq 0$ by
	\begin{equation}	\label{central region}
	\Damu = \left\{ x \in \R^d \colon \depth{x}{\mu} \geq \alpha \right\},
	\end{equation}
play in nonparametric statistics the role of the inner quantile regions of $\mu$. They are often called the \emph{(halfspace) central regions} of $\mu$. The sets \eqref{central region} are nested, closed and convex; they are compact for $\alpha > 0$, and non-empty for $\alpha \leq \am$, where $\am = \sup_{x \in \R^d} \depth{x}{\mu}$ is the maximum halfspace depth of $\mu$. Of special importance is the set $\median = D_{\alpha^*(\mu)}$, which contains points that are the most centrally positioned w.r.t. $\mu$. It is called the set of the \emph{halfspace medians} of $\mu$ and, as its name suggests, it generalises the median to $\R^d$. The halfspace depth has many applications in multivariate statistics, and is already for 30 years a subject of active research \cite{Liu_etal2019, Masse2004, Mizera_Volauf2002, Nagy_etal2019, Rousseeuw_Ruts1999}. Although many other statistical depth functions have been developed \cite{Chernozhukov_etal2017, Mosler_Mozharovskyi2022, Zuo_Serfling2000}, in this paper we focus on the halfspace depth, and sometimes write simply \emph{depth} instead of \emph{halfspace depth}. 
We call a measure $\mu \in \Meas$ \emph{smooth} if the $\mu$-mass of every hyperplane in $\R^d$ is zero. A measure with a density is smooth; examples of non-smooth measures are those with an atom. The infimum in \eqref{halfspace depth} is attained for smooth measures. That is why theoretical results on the halfspace depth are often formulated only for smooth measures, and why the analysis of the sample halfspace depth (that is, the halfspace depth evaluated w.r.t. empirical measures of random samples) is performed using different techniques \cite{Liu_etal2020, Masse2004, Mizera_Volauf2002}. In this paper we introduce \emph{flag halfspaces} --- symmetrised variants of closed halfspaces that may be considered in \eqref{halfspace depth} instead of $\half(x)$ without altering the depth, with the property that a flag halfspace attaining the depth always exists. We will see that our restatement of formula~\eqref{halfspace depth} simplifies many theoretical derivations about the halfspace depth, as it is no longer needed to distinguish whether the infimum in~\eqref{halfspace depth} is attained.

Flag halfspaces are introduced in Section~\ref{section:flag halfspace}. % We show that the depth \eqref{halfspace depth} can be equivalently expressed in terms of flag halfspaces, and that a flag halfspace that attains the depth always exists. 
Two applications to the computation of the depth are given in Section~\ref{section:depth for empirical measures}. In Section~\ref{section:dimensionality}, we investigate the dimensionality and the structure of the median set $\median$ for $\mu$ an empirical measure. We show that for datasets sampled from absolutely continuous probability measures in $\R^d$, the halfspace median set cannot be of dimension $d-1$, almost surely. In a series of examples in $\R^3$ we demonstrate that already for random samples of size $n = 8$ from the standard Gaussian distribution, halfspace median sets of dimensions $0$, $1$, and $3$ occur with positive probability. In Section~\ref{section:d2} we deal with the special situation of data of dimension $d=2$. We show that if the dataset satisfies a mild condition of general position, then the halfspace median set must be either a full-dimensional polygon, or a data point. Both these advances find applications in the computation of the halfspace median and the central regions \eqref{central region}, where the dimensionality of $\median$ plays a crucial role \cite{Fojtik_etal2022, Liu_etal2019}. The paper is complemented by online Supplementary Material containing \proglang{R} and \proglang{Mathematica} scripts with visualisations and computations completing examples from Section~\ref{section:depth for empirical measures}.

\subsection*{Notations.} Some of our proofs are based on convexity theory. As a basic reference we take \cite{Schneider2014}; we now gather notations and elementary definitions that will be used throughout the paper. The unit sphere in $\R^d$ is $\Sph$. We write $S \subset K$ for $S$ being a proper subset of $K$. The restriction of $\mu\in\Meas$ to a Borel set $S\subseteq \R^d$ is denoted by $\mu|_{S} \in \Meas$ and is defined by $\mu|_{S}\left(B\right)=\mu\left(B\cap S\right)$ for $B \subseteq \R^d$ Borel. The affine hull $\aff{S}$ of $S\subseteq \R^d$ is the smallest affine subspace of $\R^d$ containing $S$. The dimension $\dim(S)$ of $S$ is defined as the dimension of $\aff{S}$. For example, the affine hull of two different points in $\R^d$ is the infinite line joining them, and its dimension is 1. We write $\intr{S}$, $\cl{S}$, and $\bd{S}$ for the interior, closure, and boundary of $S \subseteq \R^d$. % For an affine subspace $A\subseteq \R^d$ and $S\subseteq A$ we write $\intrr_A(S)$, $\clrr_A(S)$ and $\bdr_A(S)$ for the relative interior, relative closure and relative boundary of the set $S$, relative to the space $A$. 
The interior, closure, and boundary of $S$ when considered as a subset of its affine hull $\aff{S}$ is denoted by $\relint{S}$, $\relcl{S}$ and $\relbd{S}$, and is called the relative interior, relative closure, and relative boundary of $S$, respectively. Of course, if $\dim(S) = d$, the interior is the same as the relative interior etc.
	
The class of all closed halfspaces in $\R^d$ is $\half$. A generic halfspace from $\half$ may be denoted simply by $H$; $H_{x,v}$ means a halfspace $\left\{ y \in \R^d \colon \left\langle y, v \right\rangle \geq \left\langle x,v \right\rangle \right\}$ whose boundary passes through $x \in \R^d$ with inner normal $v \in \R^d \setminus \{0\}$. For an affine space $A\subseteq \R^d$ and $x \in A$ we denote by $\half(x,A)$ the set of all relatively closed halfspaces $H$ in $A$ whose relative boundary contains $x$; surely $\half(x,\R^d) \equiv \half(x)$. We say that a sequence of halfspaces $\{H_{x_n,v_n}\}_{n=1}^\infty\subset\half$ converges to $H_{x,v}\in\half$ if $x_n\rightarrow x$ and $v_n\rightarrow v$. Finally, for any of the symbols $\half$, $\half(x)$, or $\half(x,A)$, a superscript $\circ$ designates the corresponding relatively open halfspaces, e.g. $\halfo(x,A) = \left\{ \relint{H} \colon H \in \half(x,A) \right\}$.

\section{Flag halfspaces} \label{section:flag halfspace}

For $\mu\in\Meas$ and $x \in \R^d$ we call $H \in \half(x)$ a \emph{minimising halfspace} of $\mu$ at $x$ if $\mu(H) = \depth{x}{\mu}$. For $d = 1$ minimising halfspaces always trivially exist. They also exist if $\mu$ is smooth, or if $\mu$ is supported in a finite number of points. In general, however, the infimum in~\eqref{halfspace depth} does not have to be attained. We give a simple example.

\begin{example}	\label{example:flag}
Take $\mu \in \Meas[\R^2]$ the sum of the Dirac measure at $a = (1,1) \in \R^2$ and the uniform distribution on the disk $\left\{ x \in \R^2 \colon \left\Vert x \right\Vert \leq 2 \right\}$. For $x = (1,0) \in \R^2$ no minimising halfspace exists. As we see in Figure~\ref{figure:flag halfspace}, the depth $\D(x;\mu)$ is approached by $\mu(H_{x,v_n})$ for a sequence of halfspaces $H_n \equiv H_{x,v_n}$, $n=1,2,\dots$, with inner normals $v_n = \left( \cos(-1/n), \sin(-1/n) \right)$ that converge to $v = (1,0) \in \SphO$, yet $\D(x;\mu) = \lim_{n \to \infty} \mu(H_{x,v_n}) < \mu(H_{x,v})$.
\end{example}

\begin{figure}[htpb]
\includegraphics[width=\twofigb\textwidth]{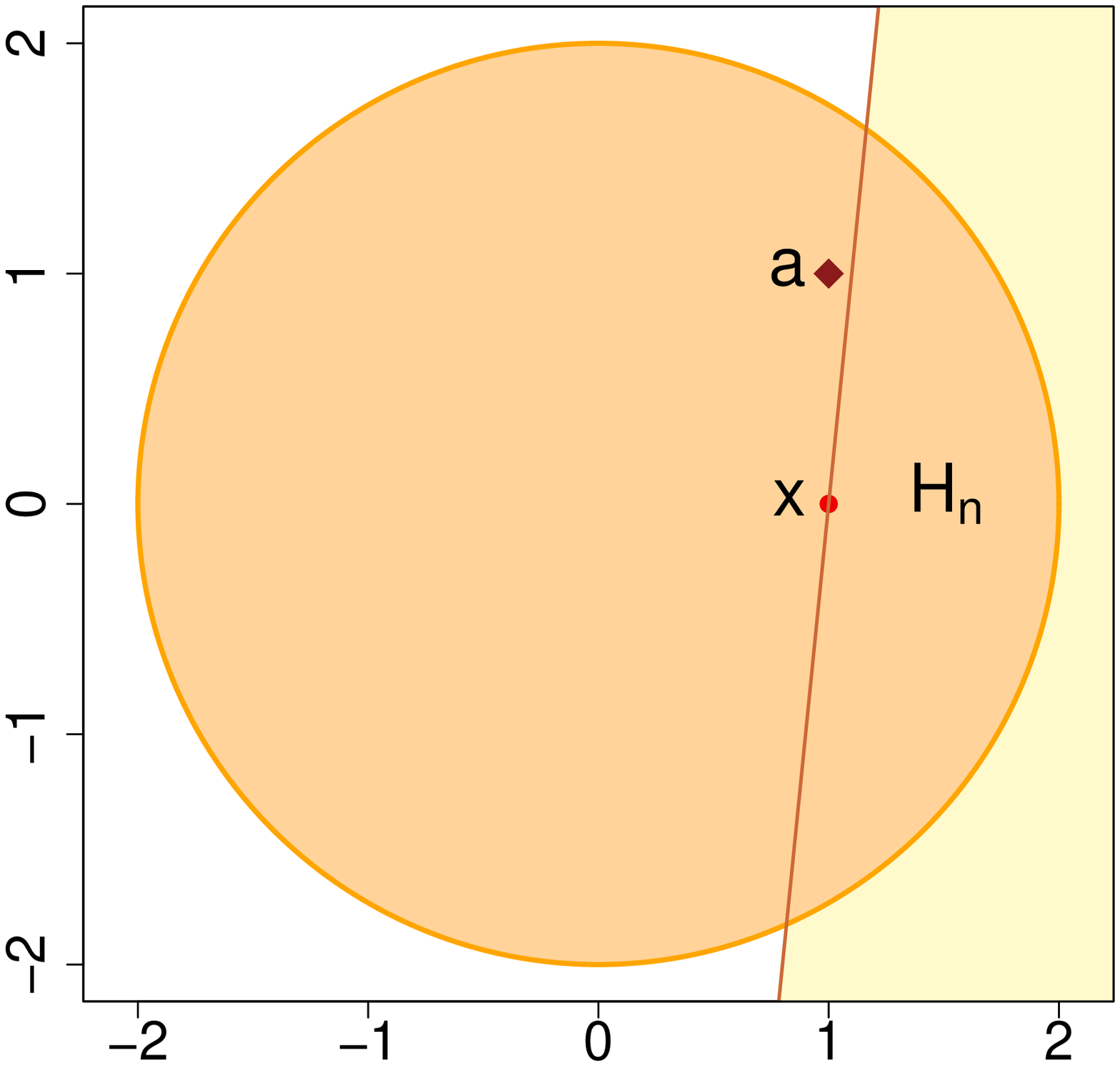} \qquad \includegraphics[width=\twofigb\textwidth]{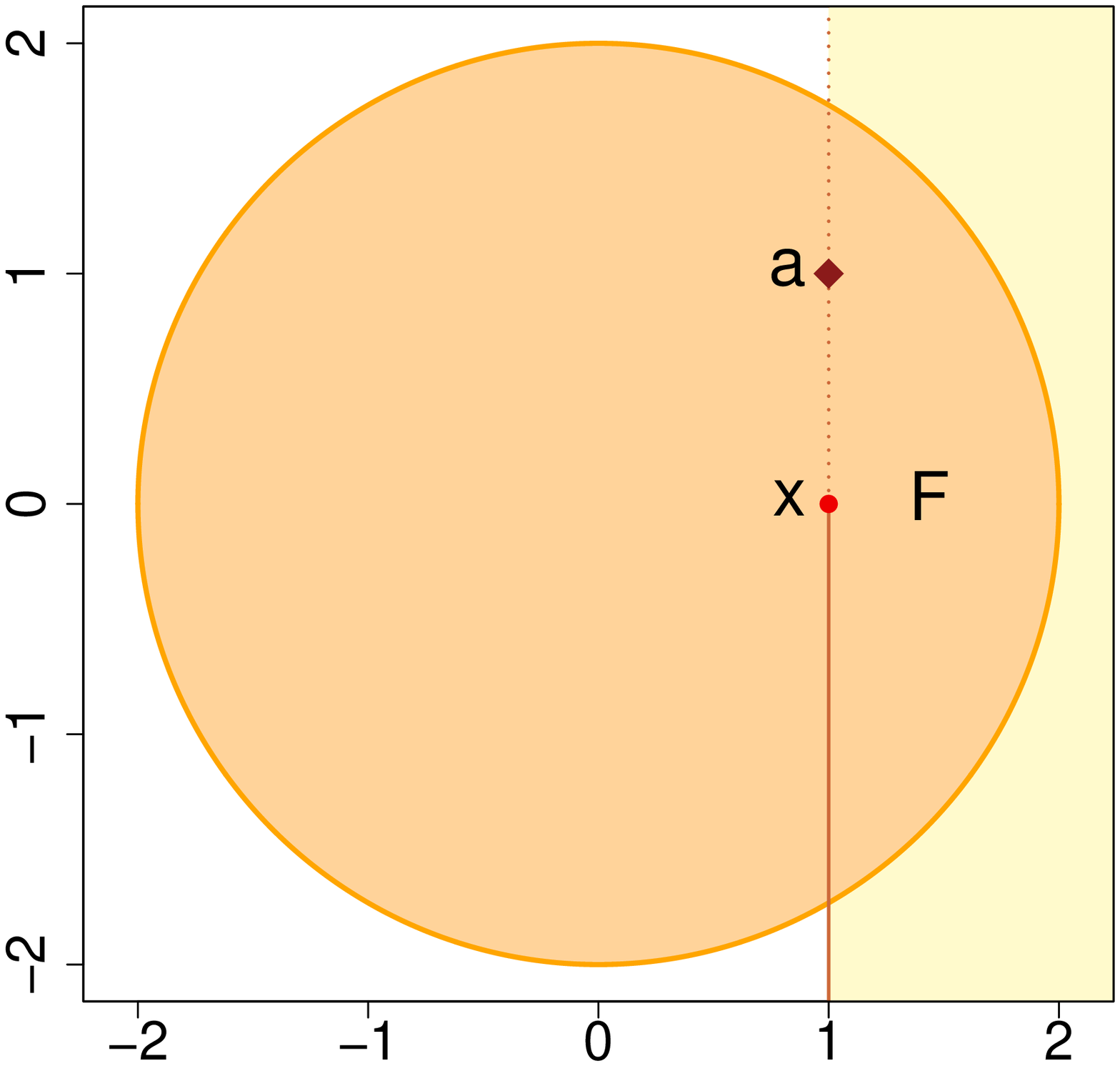}%
\caption{The support of $\mu \in \Meas[\R^2]$ from Example~\ref{example:flag} (coloured disk) and its atom $a$ (diamond). No minimising halfspace of $\mu$ at $x = (1,0)\in\R^2$ (coloured point) exists. In the left hand panel we see a halfspace $H_n \in \half(x)$ whose $\mu$-mass is almost $\D(x;\mu)$. It does not contain $a$. In the right hand panel the minimising flag halfspace $F \in \flag(x)$ of $\mu$ at $x$ is displayed.}
\label{figure:flag halfspace}
\end{figure}

The problem with measures not attaining the infimum in \eqref{halfspace depth} is elegantly resolved by considering flag halfspaces instead of the usual closed halfspaces. 

\begin{definition}
Define $\flag(x)$ as the system of all sets $F$ of the form 
	\begin{equation}	\label{eq:flag halfspace}
	F = \{x\}\cup \left(\bigcup_{k=1}^d G_k\right)	
	\end{equation}
where $G_d\in\halfo(x)$, and $G_k \in \halfo(x,\relbd{G_{k+1}})$ for every $k = 1, \dots, d-1$. Any element of $\flag(x)$ is called a \emph{flag halfspace} at $x$. % For an affine subspace $A\subseteq \R^d$ with $x \in A$ we denote by $\flag(x,A)$ the system of all flag halfspaces at $x$ in the space $A$.
\end{definition}

The name \emph{flag} comes from geometry \cite{Schneider2014}, where an analogous recursive construction is considered, involving nested faces of convex polytopes. % In analogy with that theory, for a flag halfspace $F$ from \eqref{eq:flag halfspace} we call $G_k$ the \emph{$k$-dimensional face of $F$}. The inner normal vector $v_k \in \Sph$ of $G_k$ inside $\aff{G_k}$ is called the \emph{$k$-dimensional inner normal of $F$}, $k=1,\dots,d$. 
The formal definition of flag halfspaces is somewhat convoluted, but these sets appear naturally. In $\R^2$, a flag halfspace at $x$ is the union of an open halfplane $G_2$ whose boundary passes through $x$, a relatively open halfline $G_1$ originating at $x$ contained in the one-dimensional affine space (line) $\bd{G_2}$, and the $0$-dimensional point $x$ itself. For an example see Figure~\ref{figure:flag halfspace}. A flag halfspace is neither an open nor a closed set. In contrast to a usual closed halfspace, a complement of a flag halfspace $F \in \flag(x)$ is, except for its central point $x$, again a flag halfspace from $\flag(x)$, i.e. $(\R^d \setminus F) \cup \{ x \} \in \flag(x)$. Several more interesting properties and characterisations of flag halfspaces can be found in~\cite{Laketa_etal2022simple}.

We define a \emph{minimising flag halfspace} of $\mu$ at $x$ to be any $F \in \flag(x)$ that satisfies $\mu(F) = \D(x;\mu)$. In the following Theorem~\ref{theorem:Pokorny} we show that the halfspace depth \eqref{halfspace depth} of any measure can be expressed in terms of the $\mu$-mass of flag halfspaces, and a minimising flag halfspace always exists. The intuition behind this result is as follows: Even if the minimising closed halfspace of $x\in\R^d$ does not exist, there is a sequence of closed halfspaces $\{H_n\}_{n=1}^\infty \subset \half(x)$ that satisfies 
	\begin{equation}\label{minimising limit}
	\lim_{n\to\infty}\mu\left(H_n\right)=\depth{x}{\mu}.
	\end{equation}
Because the unit normals $\{v_n\}_{n=1}^\infty$ of these halfspaces come from the compact set $\Sph$, we can also assume that the sequence of halfspaces is convergent and $\lim_{n \to \infty} v_n = v \in \Sph$ (otherwise, we extract a convergent subsequence). For $n$ large enough, $\mu\left(H_n\right)$ is arbitrarily close to $\depth{x}{\mu}$, but this fact alone, of course, does not imply that the $\mu$-mass of the limit $H \equiv H_{x,v}$ defined as $H = \lim_{n\to\infty}H_n$ is equal to $\depth{x}{\mu}$. It turns out that for general measures, it is not possible to find any useful upper bound on the mass $\mu\left(H\right)$, but it is possible to bound the mass of its interior by $\mu\left(\intr{H}\right)\leq\depth{x}{\mu}$. The interior of $H$ is the first open halfspace $G_d$ in the construction of the minimising flag halfspace \eqref{eq:flag halfspace}. The remaining relatively open halfspaces $G_k$ are found by iterating the same process inside the relative boundary of the previous $G_{k+1}$, $k=1,\dots,d-1$. In the right hand panel of Figure~\ref{figure:flag in plane} we see a visualisation of our setup, with $d=2$. In the situation displayed, as $n \to \infty$, the halfspaces $H_n$ do not intersect the halfline $G_1^- \subset \bd{H}$ originating at $x$, so the $\mu$-mass of $G_1^-$ does not contribute to the depth of $x$, and $G_1^-$ should not be contained in $F$. The formal statement of our theorem follows. 

\begin{figure}[htpb]
\includegraphics[width=\twofigb\textwidth]{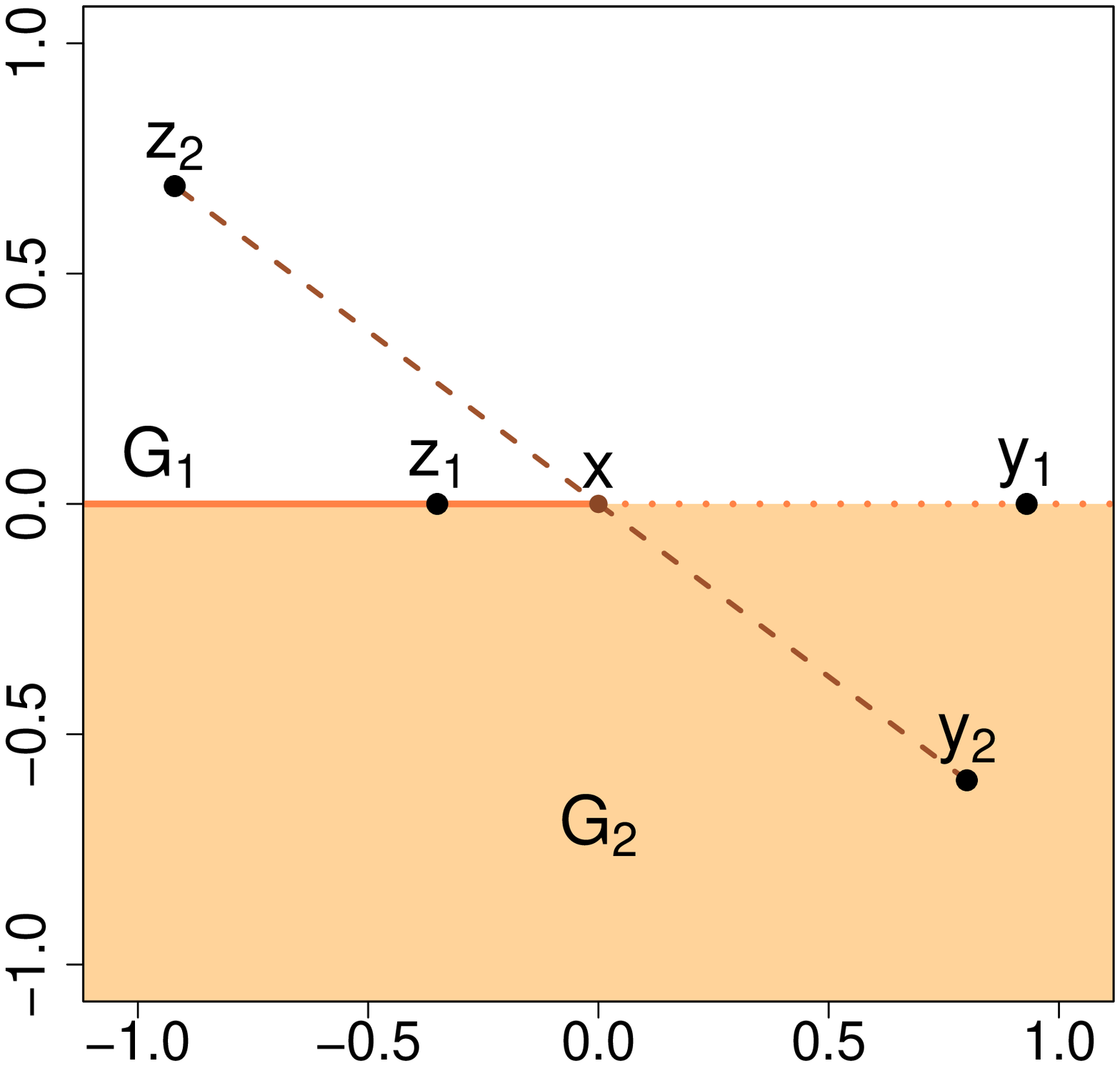} \qquad \includegraphics[width=\twofigb\textwidth]{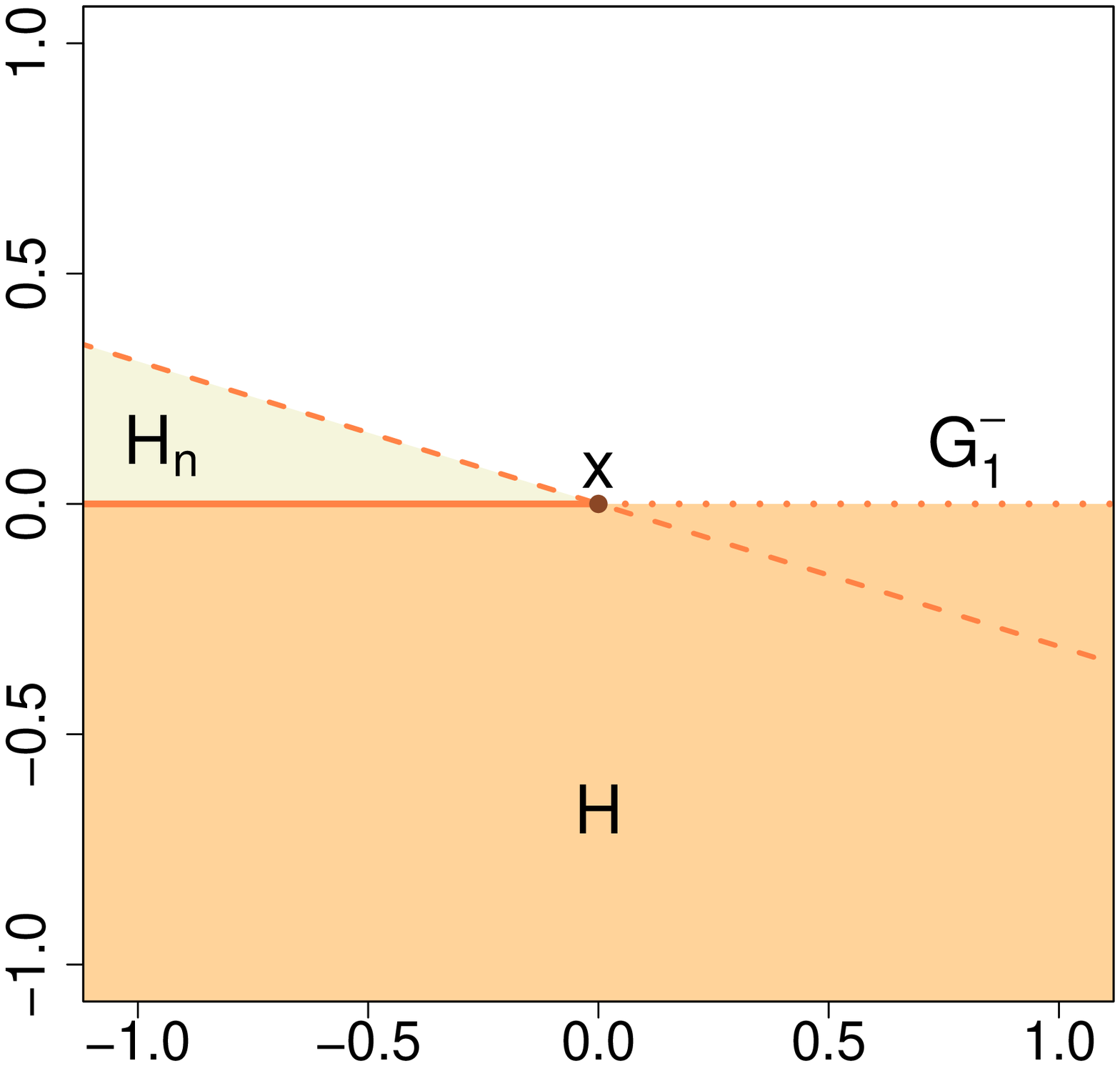}
\caption{Left hand panel: A flag halfspace $F=\{x\}\cup G_1\cup G_2 \in \flag(x)$ in the plane. For any line segment with endpoints $y, z$ passing through $x$, exactly one of the points $y,z$ belongs to $F$. Right hand panel: $H=\cl{F}$ for $H$ being the limit of a sequence of closed halfspaces $\left\{ H_n \right\}_{n=1}^\infty$ satisfying \eqref{minimising limit}. Certainly, $\mu\left(H\right)$ is not necessarily equal to $\depth{x}{\mu}$ because it is possible that $\mu\left(H\setminus H_n\right)\geq\mu\left(G_1^-\right)>0$.} 
\label{figure:flag in plane}
\end{figure}

\begin{theorem}	\label{theorem:Pokorny}
For any $\mu \in \Meas$ and $x \in \R^d$ we have
	\begin{equation}	\label{flag halfspace depth}
	\D\left(x;\mu\right) = \min\left\{ \mu\left(F\right) \colon F \in \flag(x) \right\}.
	\end{equation}
In particular, there always exists a minimising flag halfspace. 
\end{theorem}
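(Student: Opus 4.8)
The plan is to establish \eqref{flag halfspace depth} through two one-sided bounds: that $\mu(F)\ge\depth{x}{\mu}$ for every $F\in\flag(x)$, and that some $F^{*}\in\flag(x)$ satisfies $\mu(F^{*})\le\depth{x}{\mu}$. Together these give $\min_{F}\mu(F)=\depth{x}{\mu}$ with the value attained at $F^{*}$.

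For the first bound I would begin with a clean reformulation of \eqref{eq:flag halfspace}. Taking $x=0$ without loss of generality, the nested construction yields inner normals $v_d,v_{d-1},\dots,v_1$ --- that of $G_d$ in $\R^d$, that of $G_{d-1}$ inside $\relbd{G_d}$, and so on --- which are mutually orthogonal unit vectors and hence an orthonormal basis of $\R^d$; one checks that $y\in F$ precisely when $y=0$ or the first nonzero coefficient of $y$ in the ordered basis $(v_d,\dots,v_1)$ is positive, so that $F$ is a lexicographic halfspace. I would then take the perturbed normals $v(\varepsilon)=\sum_{k=1}^{d}\varepsilon^{\,d-k}v_k$ and the associated closed halfspaces $H_{x,v(\varepsilon)}\in\half(x)$. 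For each fixed $y$, the sign of $\langle y,v(\varepsilon)\rangle$ matches that of the first nonzero coefficient of $y$ once $\varepsilon>0$ is small enough, so $\mathbf 1_{H_{x,v(\varepsilon)}}\to\mathbf 1_{F}$ pointwise as $\varepsilon\to0^{+}$. Since $\mu$ is finite, dominated convergence gives $\mu(H_{x,v(\varepsilon)})\to\mu(F)$, and as $\depth{x}{\mu}\le\mu(H_{x,v(\varepsilon)})$ for all $\varepsilon$, we obtain $\depth{x}{\mu}\le\mu(F)$.

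The substantive half is to construct a flag halfspace with mass at most the depth, and I would proceed by induction on $d$. The base case $d=1$ is immediate, since a flag halfspace in $\R^{1}$ is a closed halfline and the infimum in \eqref{halfspace depth} is a minimum over the two halflines at $x$. For the inductive step, choose a minimising sequence $H_n=H_{x,v_n}\in\half(x)$ with $\mu(H_n)\to\depth{x}{\mu}$; by compactness of $\Sph$ we may assume $v_n\to v$. Set $G_d=\intr{H_{x,v}}$ and let $P=\relbd{G_d}$ be the bounding hyperplane through $x$, carrying the finite measure $\nu=\mu|_{P}$. The induction hypothesis, applied to $\nu$ on the $(d-1)$-dimensional space $P$, furnishes a minimising flag halfspace $F'=\{x\}\cup G_{d-1}\cup\dots\cup G_1$ of $\nu$ at $x$ relative to $P$, with $\nu(F')=\D_P(x;\nu)$, where $\D_P(x;\nu)=\inf\{\nu(H):H\in\half(x,P)\}$ is the depth computed inside $P$. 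Then $F=G_d\cup F'$ lies in $\flag(x)$ and $\mu(F)=\mu(G_d)+\D_P(x;\nu)$.

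The crux --- and the step I expect to be the main obstacle --- is the inequality $\mu(G_d)+\D_P(x;\nu)\le\depth{x}{\mu}$. I would extract it from the decomposition
\[
 \mu(H_n)=\mu\!\left(H_n\cap G_d\right)+\nu\!\left(H_n\cap P\right)+\mu\!\left(H_n\cap\left(\R^d\setminus\cl{G_d}\right)\right),
\]
where the three sets partition $\R^d$ into the open halfspace $G_d$, the hyperplane $P$, and the opposite open halfspace. The last summand is nonnegative. Writing $v_n=a_nv+w_n$ with $w_n\perp v$, the slice $H_n\cap P$ equals $\{y\in P:\langle y-x,w_n\rangle\ge0\}$, a relatively closed halfspace through $x$ in $\half(x,P)$ (or all of $P$ when $w_n=0$), so its $\nu$-mass is at least $\D_P(x;\nu)$. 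For the first summand, $\mathbf 1_{H_n\cap G_d}\to\mathbf 1_{G_d}$ pointwise, so Fatou's lemma gives $\liminf_n\mu(H_n\cap G_d)\ge\mu(G_d)$. Passing to the $\liminf$ in the displayed identity, and using $\mu(H_n)\to\depth{x}{\mu}$, yields $\depth{x}{\mu}\ge\mu(G_d)+\D_P(x;\nu)$. Combined with the first bound applied to $F$, this forces $\mu(F)=\depth{x}{\mu}$, which closes the induction and proves the theorem. The key idea making the induction work is precisely that the boundary slice $H_n\cap P$ is an admissible halfspace for the lower-dimensional depth on $P$, so that the mass on $\relbd{G_d}$ that the sequence $H_n$ may retain is controlled by $\D_P(x;\nu)$.
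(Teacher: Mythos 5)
Your proof is correct, and although its analytic core coincides with the paper's, the logical architecture is genuinely different. Both arguments take a minimising sequence $H_n\to H=H_{x,v}$, split the mass into the open halfspace $\intr{H}$ and the boundary hyperplane $P=\bd{H}$, bound the interior term via Fatou (the paper uses continuity from below on $A_n=\bigl(\bigcap_{m\geq n}H_m\bigr)\cap\intr{H}$, which is the same estimate), and observe that the slice $H_n\cap P$ is an admissible halfspace for the depth inside $P$, yielding $\depth{x}{\mu}\geq\mu\left(\intr{H}\right)+\depth{x}{\mu|_{P}}$. From here the routes diverge: the paper upgrades this inequality to an \emph{equality} at each step by a contradiction argument --- tilting $H$ toward a near-minimising halfspace of $\mu|_P$ via normals $w_n=v+\widetilde v/n$ to manufacture a closed halfspace of mass strictly below $\depth{x}{\mu}$ --- and then telescopes these equalities $d$ times to assemble the flag. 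You never prove the per-step equality; instead you run a clean induction on the dimension using only the one-sided inequality, and you close the sandwich with your separately proven ``easy'' direction, namely that every $F\in\flag(x)$ satisfies $\mu(F)\geq\depth{x}{\mu}$, obtained from the lexicographic description of flag halfspaces, the perturbed normals $v(\varepsilon)$, and dominated convergence (your orthonormality claim for the successive inner normals and the pointwise convergence $\mathbf 1_{H_{x,v(\varepsilon)}}\to\mathbf 1_F$ both check out). This buys you two things: you avoid the tilting contradiction entirely, and your argument explicitly establishes the inequality $\min_{F\in\flag(x)}\mu(F)\geq\depth{x}{\mu}$, which is needed for the ``min'' formulation of \eqref{flag halfspace depth} but is not spelled out in the paper's proof, which only constructs one flag with $\mu(F)=\depth{x}{\mu}$. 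What the paper's route buys in exchange is the exact recursion \eqref{induction step}, $\depth{x}{\mu}=\mu(G_d)+\depth{x}{\mu|_{\relbd{G_d}}}$, valid for the limit of any minimising sequence; the paper highlights this as the observation formalised in Corollary~\ref{cor:boundary depth}, whereas in your approach that recursion would have to be re-derived (it does follow easily by combining your two bounds).
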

 
\begin{proof}
Let $\left\{H_n\right\}_{n=1}^\infty \subset \half(x)$ be a sequence of halfspaces satisfying \eqref{minimising limit} with limit $H \equiv H_{x,v} =\lim_{n\to\infty}H_n$. For all $n=1,2,\dots$ we have
	\begin{equation}	\label{mu decomposition}
	\mu(H_n) \geq \mu(H_n \cap H) = \mu\left(H_n \cap \intr{H}\right) + \mu\left(H_n \cap \bd{H} \right).	
	\end{equation}
We first bound both summands on the right hand side from below. For each $n=1,2,\dots$ we define $A_n= \left(\bigcap_{m\geq n}H_m\right) \cap \intr{H} \subseteq H_n \cap \intr{H}$. From the convergence of the halfspaces $\left\{ H_n \right\}_{n=1}^\infty$ we know that $A_n\uparrow\intr{H}$ as $n\rightarrow \infty$, and using the continuity of measure from below \cite[Theorem~3.1.11]{Dudley2002} we obtain the equality in
	\begin{equation}\label{intersection mass}
	\mu(\intr{H}) = \lim_{n\to\infty} \mu\left( A_n \right) \leq \liminf_{n\to\infty} \mu\left( H_n \cap \intr{H} \right).
	\end{equation}
On the other side, $x\in H_n\cap \bd{H}$ for all $n=1,2,\dots$, so $H_n \cap \bd{H}$ is either a closed halfspace when considered in the $(d-1)$-dimensional space $\bd{H}$, or is equal to $\bd{H}$. In any case, we have that $ \mu\left(H_n \cap \bd{H} \right)\geq \depth{x}{\mu|_{\bd{H}}}$ for $\mu|_{\bd{H}}$ the restriction of $\mu$ to the hyperplane $\bd{H}$. Consequently
	\begin{equation}\label{boundary mass}
	\liminf_{n\to\infty} \mu\left(H_n \cap \bd{H} \right)\geq \depth{x}{\mu|_{\bd{H}}}.
	\end{equation}
Combining~\eqref{mu decomposition}, \eqref{intersection mass} and~\eqref{boundary mass} one gets
	\begin{equation}	\label{boundary depth}	
	\begin{aligned}
	\depth{x}{\mu} & = \lim_{n \to \infty} \mu(H_n) \geq \liminf_{n\to\infty} \mu\left( H_n \cap \intr{H} \right)+ \liminf_{n\to\infty} \mu\left(H_n \cap \bd{H} \right) \\
	& \geq \mu(\intr{H}) + \depth{x}{\mu|_{\bd{H}} }.
	\end{aligned}
	\end{equation}
Assume now for a contradiction that the inequality in~\eqref{boundary depth} is strict, i.e. that $\depth{x}{\mu}- \mu(\intr{H}) - \depth{x}{\mu|_{\bd{H}}}=c>0$. The definition of the halfspace depth implies that there exists a halfspace $\widetilde{H}\in\half(x,\bd{H})$ in the hyperplane $\bd{H}$ that satisfies
	\begin{equation}\label{minimising boundary}
	\mu|_{\bd{H}}(\widetilde{H})<\depth{x}{\mu|_{\bd{H}} }+c/2.
	\end{equation}
Denote by $\widetilde{v} \in \Sph$ the unit inner normal of $\widetilde{H}$ and set $w_n=v+\widetilde{v}/n$ and $C_n=H_{x,w_n}\setminus H$ for $n=1,2,\dots$. Then $w_n\to v$ and $C_n\downarrow \emptyset$ as $n\to\infty$, meaning that $\lim_{n\to\infty} H_{x,w_n} = H$ and $\lim_{n\to\infty}\mu(C_n)=0$ due to the continuity of measure from above \cite[Theorem~3.1.1]{Dudley2002}. For $n$ large enough we have $\mu(H_{x,w_n}\setminus H)<c/2$. Note also that $H_{x,w_n}\cap \bd{H}=\widetilde{H}$ for all $n=1,2,\dots$, due to the choice of $w_n$. Altogether, we have
	\begin{equation}	\label{contradiction}
	\begin{aligned}
	\mu(H_{x,w_n}) & = \mu(H_{x,w_n}\cap \intr{H})+\mu(H_{x,w_n}\cap \bd{H})+ \mu(H_{x,w_n}\setminus H) \\
	& <\mu(\intr{H})+\mu(\widetilde{H})+c/2 = \mu(\intr{H})+\mu|_{\bd{H}}(\widetilde{H})+c/2 \\
	&<\mu(\intr{H})+\depth{x}{\mu|_{\bd{H}} }+c=\depth{x}{\mu},
	\end{aligned}
	\end{equation}
where the last inequality in~\eqref{contradiction} follows from~\eqref{minimising boundary}. Note that because $H_{x,w_n} \in \half(x)$, inequality \eqref{contradiction} contradicts the definition of the halfspace depth \eqref{halfspace depth}, and we get
	\begin{equation} \label{induction step}
	\depth{x}{\mu}= \mu(G_d) + \depth{x}{\mu|_{\relbd{G_d}}},
	\end{equation}
where we denoted $G_d=\intr{H}\in\halfo(x)$. We have just constructed the first open halfspace $G_d$ in the system \eqref{eq:flag halfspace}. We proceed by induction. We consider $\mu|_{\bd{H}} = \mu|_{\relbd{G_d}}$ instead of $\mu$ and using the same argument obtain $G_{d-1}\in\halfo(x,\relbd{G_d})$ that satisfies an equation analogous to \eqref{induction step}, i.e. $\depth{x}{\mu|_{\relbd{G_d}}}=\mu(G_{d-1})+\depth{x}{\mu|_{\relbd{G_{d-1}}}}$. Continuing the same procedure we eventually obtain a flag halfspace $F=\{x\}\cup\left(\bigcup_{k=1}^d G_k\right)$ such that 
    \begin{equation*}	%\label{flag mass}
	\begin{aligned}
	\depth{x}{\mu} &= \mu(\intr{H}) + \depth{x}{\mu|_{\bd{H}}}=\mu(G_d)+\mu(G_{d-1})+\depth{x}{\mu|_{\relbd{G_{d-1}}}}=\dots\\
	& =\sum_{k=2}^{d}\mu(G_k)+\depth{x}{\mu|_{\relbd{G_{2}}}} =\sum_{k=1}^{d}\mu(G_k)+\mu(\{x\})=\mu(F).
	\end{aligned}
	\end{equation*}
The last but one equality above follows from the fact that $\relbd{G_2}$ is a line, meaning that $G_1$ is one of the two relatively open halflines determined by $x$ in $\relbd{G_2}$ having a smaller $\mu$-mass. Thus, $\depth{x}{\mu|_{\relbd{G_{2}}}}=\mu\left(\{x\}\right)+\mu(G_1)$.
\end{proof}

In Example~\ref{example:flag}, the single minimising flag halfspace of $\mu$ at $x$ is 
	\[	F = \{ x \} \cup \left\{ \left(1, x_2\right) \in \R^2 \colon x_2 < 0 \right\} \cup \left\{ \left(x_1, x_2 \right) \in \R^2 \colon x_1 > 1 \right\} \in \flag(x).	\]

In formula \eqref{induction step} in the proof of Theorem~\ref{theorem:Pokorny} we unveiled the recursive nature of the halfspace depth. The following result formalises that observation. In the special situation of an empirical measure $\mu \in \Meas$, a related result has been observed in \cite[Theorems~1 and~2]{Dyckerhoff_Mozharovskyi2016} and successfully applied in the task of exact computation of the halfspace depth.

\begin{cor}\label{cor:boundary depth}
For $x\in\R^d$ and $\mu\in\Meas$ it holds true that \[\depth{x}{\mu}=\inf_{H\in\half(x)}\left(\mu(\intr{H})+\depth{x}{\mu|_{\bd{H}}}\right).\]
\end{cor}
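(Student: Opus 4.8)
The plan is to prove Corollary~\ref{cor:boundary depth} by establishing both inequalities between $\depth{x}{\mu}$ and the infimum $I := \inf_{H \in \half(x)}\left(\mu(\intr{H}) + \depth{x}{\mu|_{\bd{H}}}\right)$. The inequality $I \leq \depth{x}{\mu}$ is the easy direction and I would dispatch it first. Indeed, the proof of Theorem~\ref{theorem:Pokorny} already produced, for the specific limiting halfspace $H = H_{x,v}$, the identity \eqref{induction step}, namely $\depth{x}{\mu} = \mu(\intr{H}) + \depth{x}{\mu|_{\relbd{G_d}}} = \mu(\intr{H}) + \depth{x}{\mu|_{\bd{H}}}$. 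Since this particular $H$ lies in $\half(x)$, the infimum over all halfspaces in $\half(x)$ can only be smaller or equal, giving $I \leq \depth{x}{\mu}$ immediately.

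For the reverse inequality $I \geq \depth{x}{\mu}$, I would fix an arbitrary $H \in \half(x)$ and show that $\mu(\intr{H}) + \depth{x}{\mu|_{\bd{H}}} \geq \depth{x}{\mu}$; taking the infimum over $H$ then yields the claim. The natural approach is to use a minimising halfspace in the hyperplane $\bd{H}$: pick $\widetilde{H} \in \half(x, \bd{H})$ whose $\mu|_{\bd{H}}$-mass approximates $\depth{x}{\mu|_{\bd{H}}}$, then tilt $H$ slightly by perturbing its inner normal $v$ in the direction of the inner normal $\widetilde{v}$ of $\widetilde{H}$, exactly as in the construction $w_n = v + \widetilde{v}/n$ from the proof of Theorem~\ref{theorem:Pokorny}. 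The resulting halfspaces $H_{x,w_n} \in \half(x)$ satisfy $H_{x,w_n} \cap \bd{H} = \widetilde{H}$ and $H_{x,w_n} \setminus H \downarrow \emptyset$, so that $\mu(H_{x,w_n}) \to \mu(\intr{H}) + \mu(\widetilde{H})$ by continuity of measure from above. Since each $\mu(H_{x,w_n}) \geq \depth{x}{\mu}$ by definition, passing to the limit gives $\mu(\intr{H}) + \mu|_{\bd{H}}(\widetilde{H}) \geq \depth{x}{\mu}$, and letting $\widetilde{H}$ approach the boundary infimum yields $\mu(\intr{H}) + \depth{x}{\mu|_{\bd{H}}} \geq \depth{x}{\mu}$.

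I expect the main obstacle to be largely bookkeeping rather than conceptual, since the essential tilting estimate is exactly the contradiction computation \eqref{contradiction} already carried out inside Theorem~\ref{theorem:Pokorny}. In fact the cleanest route is to observe that \eqref{boundary depth} in that proof, which reads $\depth{x}{\mu} \geq \mu(\intr{H}) + \depth{x}{\mu|_{\bd{H}}}$, was derived for an \emph{arbitrary} convergent sequence of halfspaces with limit $H$; but the key point for the corollary is that the tilting argument establishing the \emph{reverse} bound does not use that $H$ is any sort of limit of minimising halfspaces --- it works for every $H \in \half(x)$. Thus the honest verification to supply is simply that the argument producing \eqref{contradiction}, with $\mu(\intr{H}) + \depth{x}{\mu|_{\bd{H}}}$ playing the role that $\depth{x}{\mu}$ played before, goes through verbatim for a generic $H$, so that no $H \in \half(x)$ can violate $\mu(\intr{H}) + \depth{x}{\mu|_{\bd{H}}} \geq \depth{x}{\mu}$. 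Combined with the easy direction furnished by \eqref{induction step}, this closes the proof.
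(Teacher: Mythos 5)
Your proposal is correct, but it takes a genuinely different route from the paper's. The paper's proof of Corollary~\ref{cor:boundary depth} contains no new analysis at all: it introduces the equivalence $F_1\sim F_2$ if and only if $\cl{F_1}=\cl{F_2}$ on $\flag(x)$, writes the minimum in \eqref{flag halfspace depth} as a double infimum (over equivalence classes, then within each class), identifies each class $K$ with the closed halfspace $H=\cl{G_K}\in\half(x)$, and evaluates the inner infimum as $\mu(G_K)+\depth{x}{\mu|_{\bd{G_K}}}$ by applying Theorem~\ref{theorem:Pokorny} a \emph{second} time, to the restricted measure inside the $(d-1)$-dimensional space $\bd{G_K}$. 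You instead prove the two inequalities separately, writing $I$ for the infimum in the statement: the direction $I\le\depth{x}{\mu}$ by citing the internal identity \eqref{induction step}, and the direction $I\ge\depth{x}{\mu}$ by rerunning the tilting construction $w_n=v+\widetilde{v}/n$ for an \emph{arbitrary} $H\in\half(x)$; your key observation --- that the argument producing \eqref{contradiction} nowhere uses that $H$ is a limit of minimising halfspaces --- is accurate, so this goes through. Two remarks. First, attributing the limit $\mu(H_{x,w_n})\to\mu(\intr{H})+\mu(\widetilde{H})$ to continuity of measure from above is imprecise: that only gives $\mu(C_n)\to 0$ for $C_n=H_{x,w_n}\setminus H$, while the term $\mu(H_{x,w_n}\cap\intr{H})$ tends to $\mu(\intr{H})$ by continuity from \emph{below}; fortunately your inequality needs only the trivial upper bound $\mu(H_{x,w_n}\cap\intr{H})\le\mu(\intr{H})$, so this is cosmetic. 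Second, both of your steps lean on equations internal to another proof rather than on stated results; this can be avoided, e.g.\ the easy direction follows from the statement of Theorem~\ref{theorem:Pokorny} alone by taking a minimising flag halfspace $F$, setting $H=\cl{F}$, and applying the theorem inside $\bd{H}$ to get $\mu\left(F\setminus\intr{H}\right)\ge\depth{x}{\mu|_{\bd{H}}}$. In sum, the paper's route buys economy --- it reuses only the statement of Theorem~\ref{theorem:Pokorny}, in two ambient dimensions, at the price of quotient-set bookkeeping --- while your route is more analytic and makes explicit that the nontrivial inequality is a fact about \emph{every} halfspace in $\half(x)$, not just optimal ones.
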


\begin{proof}
There are more flag halfspaces in $\flag(x)$ than closed halfspaces in $\half(x)$, in the sense that the mapping $\flag(x) \to \half(x) \colon F \mapsto \cl{F}$ is not bijective. We define an equivalence relation $\sim$ between the elements of $\flag(x)$ by
	\begin{equation*}%\label{equivalence relation}
	 F_1\sim F_2\mbox{ if and only if }\cl{F_1}=\cl{F_2}.
	\end{equation*}
By $\mathcal{K}$ we denote the quotient set of $\sim$. This allows us to rewrite~\eqref{flag halfspace depth} from Theorem~\ref{theorem:Pokorny} as
	\[	\depth{x}{\mu}=\min_{F\in\flag(x)}\mu(F)=\inf_{K\in\mathcal{K}}\inf_{F\in K}\mu(F).	\]
Note that for flag halfspaces, $\intr{F_1}=\intr{F_2}$ is equivalent with $\cl{F_1}=\cl{F_2}$. Take $K\in\mathcal{K}$ and denote $G_K=\intr{F}\in\halfo(x)$ for $F\in K$. Then each $F\in K$ can be represented as $F=G_K\cup F'$, for $F'$ a flag halfspace centred at $x$ when considered inside the affine space $\bd{G_K}$ (denoted by $F'\in \flag\left(x,\bd{G_K}\right)$). We get, using Theorem~\ref{theorem:Pokorny} again, 
	\[	\inf_{F\in K}\mu(F)=\mu(G_K)+\inf_{F'\in \flag\left(x,\bd{G_K}\right)}\mu(F')=\mu(G_K)+\depth{x}{\mu|_{\bd{G_K}}}.	\]
The mapping $\halfo(x)\to\half(x) \colon G\mapsto \cl{G}$ is a bijection, so any $K\in\mathcal{K}$ corresponds to exactly one element $H=\cl{G_K}\in\half(x)$, and we obtain desired result.
\end{proof}

\section{Applications: Properties of the sample halfspace median}	\label{section:depth for empirical measures}

We now use flag halfspaces to derive several properties of the sample halfspace median that are of interest in the practice of the depth; additional applications of flag halfspaces to the theory of the halfspace depth can be found in \cite{Laketa_Nagy2022, Laketa_etal2022simple}. Write $\Atom$ for the set of all empirical measures $\mu\in\Meas$, that is all purely atomic probability measures with a finite number $n$ of atoms, each atom having $\mu$-mass $1/n$, for some $n=1,2,\dots$. These measures are typically obtained observing a random sample $X_1, \dots, X_n \in \R^d$ from a probability distribution $\nu \in \Meas$, each sample point corresponding to an atom. To approximate the halfspace depth of $\nu$, the depth of $\mu$ is computed. The latter depth function is standardly used for inference about the unknown distribution $\nu$. Naturally, it is therefore crucial to understand the behaviour of the halfspace depth w.r.t. empirical measures. We provide results on the dimensionality of the median set, assuming that the atoms of $\mu \in \Atom$ lie in a sufficiently general position. The last assumption is not restrictive; it is satisfied if, for instance, the measure $\nu$ from which we sample is smooth. The proof of the following lemma is standard and omitted. %; for completeness we outline it in the Appendix. 

\begin{lemma}\label{random samples}
Let $X_1, X_2,\dots, X_n$ be independent random variables sampled from smooth (and possibly different) probability measures from $\Meas$. Then the following holds true almost surely.
\begin{enumerate}[label=(\roman*), ref=(\roman*)]
	\item \label{general position} The points $X_1, X_2, \dots, X_n$ are in general position.\footnote{A set $S$ of points in $\R^d$ is said to lie in \emph{general position} if no subset of $k$ of these points lies in a $(k-2)$-dimensional affine space, for all $k=2,\dots,d+1$. If there are $n>d$ points in $S$, this is equivalent to saying that no hyperplane in $\R^d$ contains more than $d$ points from $S$.} 
	\item \label{lines intersection} Writing $l(x,y)$ for the infinite line determined by $x\neq y\in \R^d$, if $d\geq 2$ and $k_1,\dots ,k_6\in \{1,2,\dots ,n\}$ are pairwise different indices, then 
					\[	l(X_{k_1},X_{k_2}) \cap l(X_{k_3},X_{k_4}) \cap l(X_{k_5},X_{k_6}) = \emptyset.	\]
	\end{enumerate}
\end{lemma}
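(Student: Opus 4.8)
The plan is to prove Lemma~\ref{random samples} by a standard measure-theoretic argument exploiting smoothness, treating each of the two statements by exhibiting the relevant ``bad'' events as countable unions of null sets. Since there are only finitely many points $X_1,\dots,X_n$, it suffices in each case to fix the indices involved and show that the corresponding exceptional event has probability zero; a finite union of null events is null, and the almost-sure conclusion follows.

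For part~\ref{general position}, I would fix any integer $k \in \{2,\dots,d+1\}$ and any choice of $k$ distinct indices, say $X_{i_1},\dots,X_{i_k}$. The event that these $k$ points lie in a common $(k-2)$-dimensional affine space can be handled by conditioning: I first condition on the values of $X_{i_1},\dots,X_{i_{k-1}}$. These $k-1$ points determine (generically) an affine subspace $A$ of dimension at most $k-2$; in fact the event ``$X_{i_1},\dots,X_{i_k}$ lie in a $(k-2)$-flat'' forces the final point $X_{i_k}$ to lie in some fixed affine subspace $A$ of dimension at most $k-2 \leq d-1$, hence inside a hyperplane. Because $X_{i_k}$ is sampled from a smooth measure and is independent of the others, the conditional probability that $X_{i_k}$ lands in this hyperplane is zero. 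Integrating over the conditioning variables gives probability zero for the fixed index set, and a finite union over all $k$ and all index choices completes part~\ref{general position}.

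For part~\ref{lines intersection}, I would fix six pairwise distinct indices $k_1,\dots,k_6$ and again condition. Condition on $X_{k_1},X_{k_2},X_{k_3},X_{k_4}$; this determines the two lines $l(X_{k_1},X_{k_2})$ and $l(X_{k_3},X_{k_4})$, and hence their intersection, which is either empty, a single point, or a whole line (the degenerate coincident-line case, which has probability zero by part~\ref{general position}-type reasoning). In the generic case the intersection is contained in a fixed affine set $P$ of dimension $0$. The event in question requires $l(X_{k_5},X_{k_6})$ to pass through $P$, i.e. the pair $(X_{k_5},X_{k_6})$ must be such that the line they span contains the prescribed point. Conditioning further on $X_{k_5}$, the line $l(X_{k_5},X_{k_6})$ passes through a fixed point $p \in P$ iff $X_{k_6}$ lies on the line through $X_{k_5}$ and $p$, a fixed line (a set of dimension $1 \le d-1$), which has $\mu$-measure zero by smoothness since $X_{k_6}$ is independent of the rest. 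Thus the conditional probability is zero, and integrating and taking a finite union over index choices finishes the argument.

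The main obstacle, and the reason the proof is ``standard'' but not entirely trivial, is bookkeeping the degenerate configurations correctly: one must verify that the conditioning really does confine the last sampled point to a set contained in some hyperplane (for part~\ref{general position}) or a lower-dimensional set of dimension at most $d-1$ (for part~\ref{lines intersection}), so that smoothness of the corresponding marginal can be invoked. In particular, for part~\ref{lines intersection} one should first dispose of the coincidence events where two of the three lines already agree, as these must be excluded using part~\ref{general position} before the clean ``point in a line'' reduction applies. Once these degenerate cases are isolated, each surviving event places an independent smooth coordinate into a fixed hyperplane, and the null-probability conclusion is immediate. This is exactly the routine verification the authors suppress by remarking that ``the proof of the following lemma is standard and omitted.''
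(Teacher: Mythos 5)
The paper gives you nothing to compare against here: the authors explicitly declare the proof of Lemma~\ref{random samples} ``standard and omitted.'' Your proposal is precisely the standard Fubini/conditioning argument they are alluding to, and its overall structure --- reduce to finitely many fixed index sets, condition on all but one point, and use smoothness plus independence to force the last point into a fixed set contained in a hyperplane, an event of probability zero --- is sound.

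Two degenerate cases need explicit patching, and you only half-address them. In part~\ref{general position}, your claim that the event ``$X_{i_1},\dots,X_{i_k}$ lie in a common $(k-2)$-flat'' forces $X_{i_k}$ into the \emph{fixed} flat spanned by $X_{i_1},\dots,X_{i_{k-1}}$ is false when those $k-1$ points are themselves affinely dependent: for instance with $k=3$, if $X_{i_1}=X_{i_2}$ then \emph{any} $X_{i_3}$ is collinear with them, so the conditioning does not confine $X_{i_3}$ at all. The clean fix is either an induction on $k$ (the degenerate configuration is exactly a bad event for a smaller $k$, already known to be null), or, avoiding induction, to use that affine dependence of a finite point set means \emph{some} point lies in the affine hull of the others, and to take the finite union over the choice of that point; for each choice, conditioning on the remaining points genuinely confines it to a fixed affine space of dimension at most $d-1$. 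Similarly, in part~\ref{lines intersection} your ``iff'' --- that $l(X_{k_5},X_{k_6})$ passes through the fixed intersection point $p$ iff $X_{k_6}$ lies on the line through $X_{k_5}$ and $p$ --- breaks down on the event $X_{k_5}=p$, where every line through $X_{k_5}$ contains $p$ and the quantified line is undefined; this event must be discarded first, which is immediate since it asks an independent smooth variable to hit a fixed point. With these two null events excised explicitly, your proof is complete and is, as far as one can tell, the argument the authors intended.
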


\subsection{Dimensionality of the sample halfspace median}  \label{section:dimensionality}

As our first application we show that for an empirical measure with atoms in general position, the median set $\median$ in dimension $d\geq 2$ cannot be $(d-1)$-dimensional, unless we are in the trivial case when the number of atoms is equal to $d$. Our findings should be seen as complementary to the earlier advances from \cite[Lemma~6]{Struyf_Rousseeuw1999}, where it was demonstrated that for $\mu \in \Atom$ with atoms in general position are all the depth regions $\Damu$ full-dimensional, except for possibly the depth median $\median$.  

\begin{theorem}\label{theorem:median dimension}
Let $\mu\in\Atom$ be a measure with $n$ atoms of mass $1/n$ in general position. If $n \neq d \geq 2$, then $\dim(\median) \neq d-1$.
\end{theorem}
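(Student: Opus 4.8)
The plan is to argue by contradiction. Suppose $\dim(\median)=d-1$, so that $\median$ lies in a hyperplane $A=\aff{\median}$ with unit normal $v$. First I would fix $x\in\relint{\median}$ and record the three masses $a=\mu(\intr{H_{x,v}})$, $b=\mu(\intr{H_{x,-v}})$, $c=\mu(A)$, which satisfy $a+b+c=1$ and are independent of $x\in A$; by general position $A$ carries at most $d$ atoms, so $c=m/n$ with $m\le d$. Because a convex set can be flattened into the hyperplane $A$ only if its supporting constraints in the two opposite directions $\pm v$ both pinch onto $A$, the intersection-of-halfspaces description of the central regions $\Damu$ forces $a<\am\le a+c$ and $b<\am\le b+c$. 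In particular $\am-a$ and $\am-b$ are positive multiples of $1/n$ bounded by $c$, so $m\ge1$.

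Next I would feed this into Corollary~\ref{cor:boundary depth}. Applying the recursion at $x\in\relint{\median}$ with the halfspaces $H_{x,\pm v}$ (whose boundary is $A$) gives $\depth{x}{\mu}\le\min(a,b)+\depth{x}{\mu|_A}$, hence $\depth{x}{\mu|_A}\ge\am-\min(a,b)>0$ for every such $x$. Thus $\median$ is contained in the positive-level central region $D_{\gamma}(\mu|_A)$ of the restricted measure inside the $(d-1)$-dimensional space $A$, where $\gamma=\am-\min(a,b)\ge1/n$. Since a positive-level central region is contained in the convex hull of the support, $\median\subseteq\convhullOp(S)$ with $S$ the set of atoms on $A$; for this hull to be $(d-1)$-dimensional (it must be, as it contains the $(d-1)$-dimensional $\median$) general position forces $|S|=d$, i.e. $m=d$, and $S$ is the vertex set of a $(d-1)$-simplex $T=\convhullOp(S)$. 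A short computation shows the within-$A$ depth of $\mu|_A$ equals $1/n$ on all of $T$, so $\gamma=1/n$ and $\median\subseteq T$; moreover $\gamma=1/n$ forces $a=b=\am-1/n$, whence $S$ is surrounded by exactly $(n-d)/2$ atoms strictly on each side of $A$. This integer count already settles the case $n-d$ odd, and in the remaining case $n-d$ even (and positive, as $n\neq d$) it is at least $1$, so genuine off-$A$ atoms exist.

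The last step, which I expect to be the main obstacle, is to derive a contradiction from the presence of these off-$A$ atoms. Choosing $y\in\relint{\median}$ that also lies in $\relint{T}$ (possible, since the $(d-1)$-dimensional $\relint{\median}$ cannot be swallowed by the $(d-2)$-dimensional $\relbd{T}$), I would aim to exhibit a flag halfspace $F\in\flag(y)$ with $\mu(F)<\am$, contradicting $\depth{y}{\mu}=\am$. The difficulty is that vertical tilts of $H_{y,v}$ only shed vertices of $T$, and since $y\in\relint{T}$ cannot be strictly separated from all of $S$ at once, such tilts never push the mass below $(\tfrac{n-d}{2}+1)/n=\am$; one is forced to tilt the boundary out of $A$ so as to discard an off-$A$ atom while picking up only the single unavoidable vertex. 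I would realize this through Theorem~\ref{theorem:Pokorny} and Corollary~\ref{cor:boundary depth}, orienting the out-of-$A$ rotation of a hyperplane through $y$ so that it avoids every atom (killing the boundary term in the recursion) while its open side contains exactly one vertex of $T$ together with strictly fewer than $(n-d)/2$ off-$A$ atoms. Making this separation step precise is exactly where general position of the \emph{full} point set, not merely of $S$, is essential: it is what prevents the off-$A$ atoms from being arranged symmetrically enough around $A$ to keep $y$ at depth $\am$, and hence what excludes a genuinely $(d-1)$-dimensional median whenever $n\neq d$.
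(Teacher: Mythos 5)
Your preparatory steps essentially reproduce the paper's own: the strict bounds $\mu(\intr{H_{x,v}})<\am$ and $\mu(\intr{H_{x,-v}})<\am$, which you justify only by a ``pinching'' heuristic, are precisely the content of the paper's Lemma~\ref{lemma:Laketa} on touching halfspaces (citable, but not free), and from there your conclusions that $A$ carries exactly $d$ atoms forming a simplex $T$, that $\depth{x}{\mu|_A}=1/n$ on $\median$, that $a=b=\am-1/n$, and that exactly $(n-d)/2$ atoms lie strictly on each side of $A$ all match the paper (your parity shortcut for $n-d$ odd is a nice extra). The genuine gap is your final step, and it is not merely a missing computation: the hyperplane you want to construct need not exist given the facts you have established. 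Take $d=2$, $n=6$, atoms $s_1=(1,0)$, $s_2=(-1,0)$ on the line $A$, and off-$A$ atoms $(1,1)$, $(-1,1)$ above and $\left(-\tfrac12,-\tfrac12\right)$, $\left(\tfrac13,-\tfrac13\right)$ below. These six points are in general position, there are $(n-d)/2=2$ atoms strictly on each side of $A$, and $y=(0,0)\in\relint{T}$; yet the four off-$A$ atoms form two antipodal pairs of directions as seen from $y$ (and $s_1,s_2$ a third antipodal pair), so every line through $y$ that avoids all atoms has exactly one vertex of $T$ and exactly two off-$A$ atoms on each open side, i.e.\ mass exactly $\tfrac12=\am-\,$candidate on both sides. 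No hyperplane of the kind you describe exists through this $y$, even though everything you derived about $y$ holds. In particular, your closing claim that general position of the full data set ``prevents the off-$A$ atoms from being arranged symmetrically enough'' is refuted by this configuration: general position does not exclude antipodal arrangements around a single point; only the stronger no-three-concurrent-lines condition~\ref{lines intersection} of Lemma~\ref{random samples} would, and that hypothesis is not available in Theorem~\ref{theorem:median dimension}.

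The paper escapes exactly this obstruction by refusing to fix a single point. It applies Lemma~\ref{separating simplex} inside $A$ to the $(d-1)$-dimensional convex set $\median\subseteq T$, obtaining \emph{two} points $x,y\in\median$ and two disjoint halfspaces $H_{x,v},H_{y,-v}$ of $A$, each containing exactly one vertex of $T$; these are then lifted to disjoint full-dimensional halfspaces $H_{x,u},H_{y,-u}$ tilted so as to miss one fixed off-$A$ atom $z$ (which exists since $n>d$). No control over how the remaining off-$A$ atoms split between the two sides is needed: writing $S_x=H_{x,u}\setminus A$ and $S_y=H_{y,-u}\setminus A$, disjointness and avoidance of $A\cup\{z\}$ give
\begin{equation*}
\mu(H_{x,u})+\mu(H_{y,-u})=\frac{2}{n}+\mu(S_x)+\mu(S_y)\leq \frac{2}{n}+1-\mu(A)-\frac{1}{n}=2\am-\frac{1}{n},
\end{equation*}
so at least one of the two halfspaces has mass strictly below $\am$ --- and it is irrelevant which one, since both $x$ and $y$ are assumed to be median points. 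This two-point averaging is the idea your sketch is missing; any repair of your one-point construction would have to use the $(d-1)$-dimensionality of the whole median set to rule out the antipodal configurations, which in effect amounts to the same multi-point argument.
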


\begin{proof}
We use two auxiliary lemmas. Our first lemma is a special case of a more general result that can be found in \cite[Lemma~4]{Laketa_Nagy2021b}. In \cite{Laketa_Nagy2021b}, that lemma is formulated with a final inequality $\mu(\intr{H})\leq\alpha$; for $\mu \in \Atom$ also a strict inequality can be written, because the depth of $\mu$ attains only finitely many values.

\begin{lemma}	\label{lemma:Laketa}
Suppose that $\mu\in\Meas$, $\alpha > 0$, a point $x\notin \Damu$ and a face $F$ of $\Damu$ are given so that the relatively open line segment $L(x,y)$ formed by $x$ and $y$ does not intersect $\Damu$ for any $y \in F$. Then there exists a touching\footnote{A halfspace $H \in \half$ is a \emph{touching} halfspace of a non-empty convex set $A \subset \R^d$ if $H \cap \cl{A} \ne \emptyset$ and $\intr{H} \cap A = \emptyset$.} halfspace $H \in \half$ of $\Damu$ such that $\mu(\intr{H})\leq\alpha$, $x\in H$ and $F\subset\bd{H}$. If, in addition, $\mu\in\Atom$, then we can write even $\mu(\intr{H})<\alpha$.
\end{lemma}

Our second lemma is a simple observation about the structure of a simplex, that is a convex hull of $k+1$ points in general position, in the linear space $\R^k$. These $k+1$ points are called the vertices of $S$.

\begin{lemma}\label{separating simplex}
For a simplex $S \subset \R^k$ and any convex set $K \subseteq S$ with non-empty interior there exist $x, y \in K$ and $v \in \Sph[k]$ such that each of the disjoint halfspaces $H_{x,v}$ and $H_{y,-v}$ contains only one vertex of $S$.  
\end{lemma}

\begin{proof}
In this proof, all the vectors are column vectors, and by $A\tr$ we denote the transpose of a matrix $A$. Denote $s_1, \dots, s_{k+1} \in S$ the vertices of $S$. Denote by $a$ any point in the interior of $K$. We first transform both $S$ and $K$ by an affine transform $T \colon \R^k \to \R^k \colon z \mapsto A \, z + b$ for $A\in\R^{k \times k}$ non-singular and $b \in \R^k$ such that $T(s_i) = e_i$ for each $i = 1, \dots, k$ for $e_i$ the $i$-th standard basis vector in $\R^k$, and $T(a) = 0$ is the origin in $\R^k$. Such an affine transform certainly exists, because each full-dimensional simplex in $\R^k$ can be uniquely mapped to any other one using an invertible affine mapping. Because $a \in \intr{K} \subseteq \intr{S}$, the origin $T(a)$ must be contained in the interior of the $T$-image of $S$ defined by $T(S) = \left\{T(z) \colon z \in S\right\}$, meaning that necessarily $T(s_{k+1}) \in (-\infty,0)^k$. Since $K$ is a convex set with $a$ in its interior, also $T(K)$ is convex with $0 = T(a) \in \intr{T(K)}$. Thus, there is a closed ball $B$ centred at the origin with radius $\delta > 0$ small enough so that $B \subseteq T(K) \subseteq T(S)$. For $\widetilde{v} = e_1 \in \Sph[k]$ we have $\left\langle \widetilde{v}, T(s_1) \right\rangle = \left\langle \widetilde{v}, e_1 \right\rangle = 1$, $\left\langle \widetilde{v}, T(s_i) \right\rangle = \left\langle \widetilde{v}, e_i \right\rangle = 0$ for $i=2,\dots,k$, and $\left\langle \widetilde{v}, T(e_{k+1}) \right\rangle < 0$. Take $\widetilde{x} = \delta\, e_1 \in B$ and $\widetilde{y} = - \widetilde{x} \in B$. Then $H_{\widetilde{x}, \widetilde{v}} = \left\{ z \in \R^k \colon \left\langle \widetilde{v}, z \right\rangle \geq \delta \right\}$ contains $e_1 = T(s_1)$ as the only vertex of $T(S)$, and $H_{\widetilde{y}, -\widetilde{v}} = \left\{ z \in \R^k \colon \left\langle \widetilde{v}, z \right\rangle \leq -\delta \right\}$ contains only $T(e_{k+1})$ as the only vertex of $T(S)$. Certainly, also $H_{\widetilde{x}, \widetilde{v}} \cap H_{\widetilde{y}, -\widetilde{v}} = \emptyset$. Now it remains to apply the inverse affine transform $T^{-1} \colon \R^k \to \R^k \colon z \mapsto A^{-1}\left(z-b\right)$ for $A^{-1} \in \R^{k \times k}$ the inverse of $A$, and define $x = T^{-1}(\widetilde{x})$, $y = T^{-1}(\widetilde{y})$, and $v = \left(A\tr e_1\right)/\left\Vert \left(A\tr e_1\right) \right\Vert \in \Sph[k]$. Because $v$ is taken to be the inner normal vector of $T^{-1}(H_{\widetilde{x},\widetilde{v}}) = H_{x,v}$, we indeed found the desired pair of halfspaces $H_{x,v}$ and $H_{y,-v}$.
\end{proof}

We are ready to prove Theorem~\ref{theorem:median dimension}. Recall that $\am = \sup_{x \in \R^d} \depth{x}{\mu}$. Assume for a contradiction that $\dim(\median)= d-1$. Then $\median$ is contained in a hyperplane that determines two different closed halfspaces --- we denote them by $H^+$ and $H^-$, respectively. Take any $w\in\intr{H^+}$ and $q\in\intr{H^-}$. We can consider the set $\median$ itself as a $(d-1)$-dimensional face of $\median$ that satisfies the conditions of Lemma~\ref{lemma:Laketa} for either of the choices $x = w$, or $x = q$. We apply Lemma~\ref{lemma:Laketa} twice, first to $x = w$ and then also to $x = q$. We obtain that $\mu(\intr{H^+}) < \am$ and $\mu(\intr{H^-}) < \am$. Denoting $G^+=\intr{H^+}$, $G^-=\intr{H^-}$ and $A=\bd{H^+} = \bd{H^-}$ we can write 
	\begin{equation}\label{smaller mass}
	\max\left\{\mu(G^+),\mu(G^-)\right\}<\am.
	\end{equation}
Applying Corollary~\ref{cor:boundary depth} to $x\in\median$ and halfspaces $H^+$ and $H^-$ and using~\eqref{smaller mass}, we get that
\begin{equation}\label{depth difference}
\depth{x}{\mu|_{A}}\geq \am-\mu(G)>0\mbox{ for all }x\in\median\mbox{ and }G\in\{G^+,G^-\}.
\end{equation}
Because $\dim(\median)=d-1$ and $\depth{x}{\mu|_A}>0$ for all $x\in\median$, there must exist at least $d$ atoms of $\mu$ in the hyperplane $A$. At the same time, due to our assumption of the atoms of $\mu$ being in general position, there are at most $d$ atoms of $\mu$ in any hyperplane, meaning that $A$ contains exactly $d$ atoms of $\mu$, and these atoms are in general position inside $A$. Consequently,  
	\begin{equation}\label{positive boundary depth}
	\depth{x}{\mu|_A}=1/n\quad\mbox{for all }x\in\median.
	\end{equation}
From~\eqref{depth difference} and~\eqref{positive boundary depth} it follows that $\am>\mu(G)\geq \am-1/n$ for $G\in\{G^+,G^-\}$. Since $\mu$ is an empirical measure with $n$ atoms, the $\mu$-mass of any set can be only a multiple of $1/n$, so it must be that $\mu(G^+)=\mu(G^-)=\am-1/n$. We obtain 
	\begin{equation}\label{mass sum}
	2\left(\am-\frac{1}{n}\right)=\mu(G^+)+\mu(G^-)=1-\mu(A).
	\end{equation}

Since we have shown that there are exactly $d$ atoms of $\mu$ in $A$, it has to be $n\geq d$. From an assumption of our theorem we thus have $n>d$. Then there exists $z\in\R^d\setminus A$ such that $\mu(\{z\})=1/n$. We apply Lemma~\ref{separating simplex} in the subspace $A$ to conclude that there exist $x,y\in\median$ and closed halfspaces $H_{x,v},H_{y,-v}$ in space $A$ such that $\mu|_A(H_{x,v})=\mu|_A(H_{y,-v})=1/n$ and $H_{x,v}\cap H_{y,-v}=\emptyset$. Choose a full-dimensional halfspace $H_{x,u}\in\half$ that meets the conditions $H_{x,u}\cap A=H_{x,v}$ and $z\notin H_{x,u}\cup H_{y,-u}$. Denote $S_x=H_{x,u}\setminus A$ and $S_y=H_{y,-u}\setminus A$. Then $H_{x,u}=H_{x,v}\cup S_x$ and $H_{y,-u}=H_{y,-v}\cup S_y$, so we have 
	\begin{equation}\label{mass1}
	\mu(H_{x,u})=1/n+\mu(S_x),\quad \mu(H_{y,-u})=1/n+\mu(S_y).
	\end{equation}
Note that the sets $S_x$ and $S_y$ are disjoint and $\left(S_x\cup S_y\right)\cap \left(A\cup\{z\}\right)=\emptyset$, so 
	\begin{equation}\label{mass2}
	\mu(S_x)+\mu(S_y)=\mu(S_x\cup S_y)\leq 1-\mu\left(A\cup\{z\}\right)=1-\mu(A)-\frac{1}{n}.
	\end{equation}
Combining \eqref{mass sum}, \eqref{mass1} and \eqref{mass2}, we obtain
	\begin{equation*}	%\label{mass3}
	\mu(H_{x,u})+\mu(H_{y,-u})=\frac{2}{n}+\mu(S_x)+\mu(S_y)\leq 2\am-\frac{1}{n}.
	\end{equation*}
It follows that $\min\{\mu(H_{x,u}),\mu(H_{y,-u})\}<\am$, a contradiction with our choice $\{x,y\}\subset \median$.
\end{proof}

Theorem~\ref{theorem:median dimension} is valid for empirical measures; an analogous theorem for absolutely continuous measures can be found in \cite[Proposition~3.4]{Small1987}. There, it was shown that for $\mu \in \Meas$ satisfying certain smoothness conditions including the existence of the density, the dimension of the median $\median$ cannot exceed $d - 2$ provided that $d \geq 2$. A version of the latter theorem with weaker conditions, but still requiring smoothness and contiguous support of $\mu \in \Meas$, is given in \cite[Corollary~7]{Laketa_Nagy2021b}. Unlike the proofs for smooth measures, the proof of Theorem~\ref{theorem:median dimension} requires the use of flag halfspaces, which makes the derivation more technical and delicate. Without the assumption of general position, the claim of Theorem~\ref{theorem:median dimension} is not valid. An example of a measure in $\mu \in \Atom[\R^2]$ whose atoms are not in general position but $\dim(\median) = 1$ is given in \cite[Section~2]{Laketa_Nagy2021b}.

Excluding the case of $\dim(\median) = d-1$ for random samples from smooth probability measures, one can ask whether there are other dimensions that the sample median set cannot attain. The answer is negative already in the case of $n = 8$ points sampled randomly from a Gaussian distribution in $\R^3$, as we show in the next example. 

\begin{figure}[htpb]
\includegraphics[width=\twofig\textwidth]{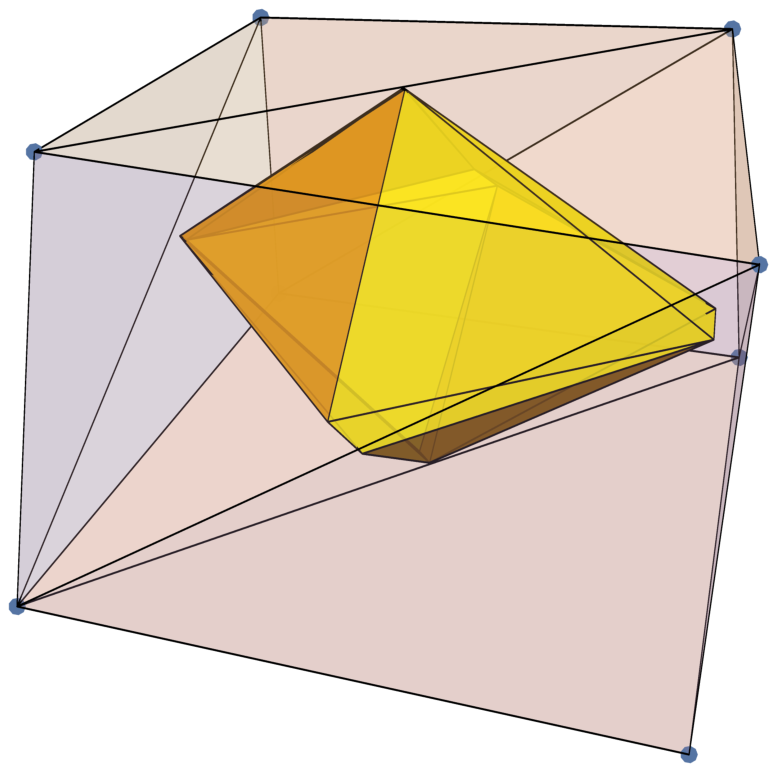} \quad \includegraphics[width=\twofig\textwidth]{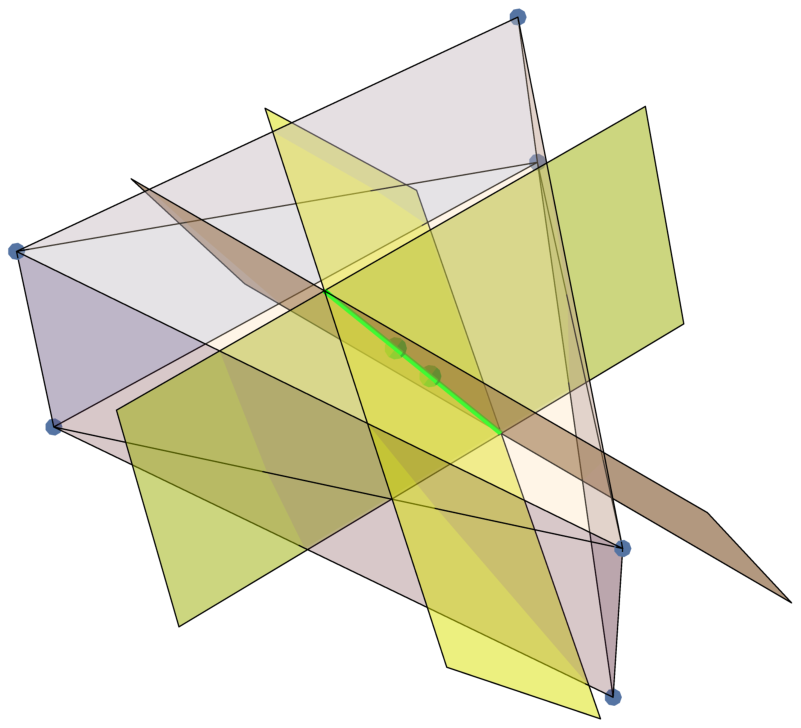}
\includegraphics[width=\twofig\textwidth]{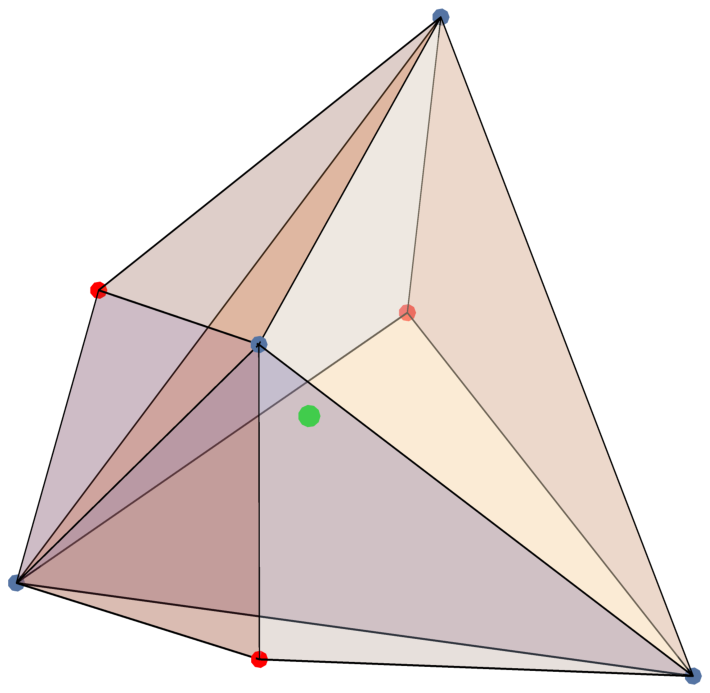} \quad \includegraphics[width=\twofig\textwidth]{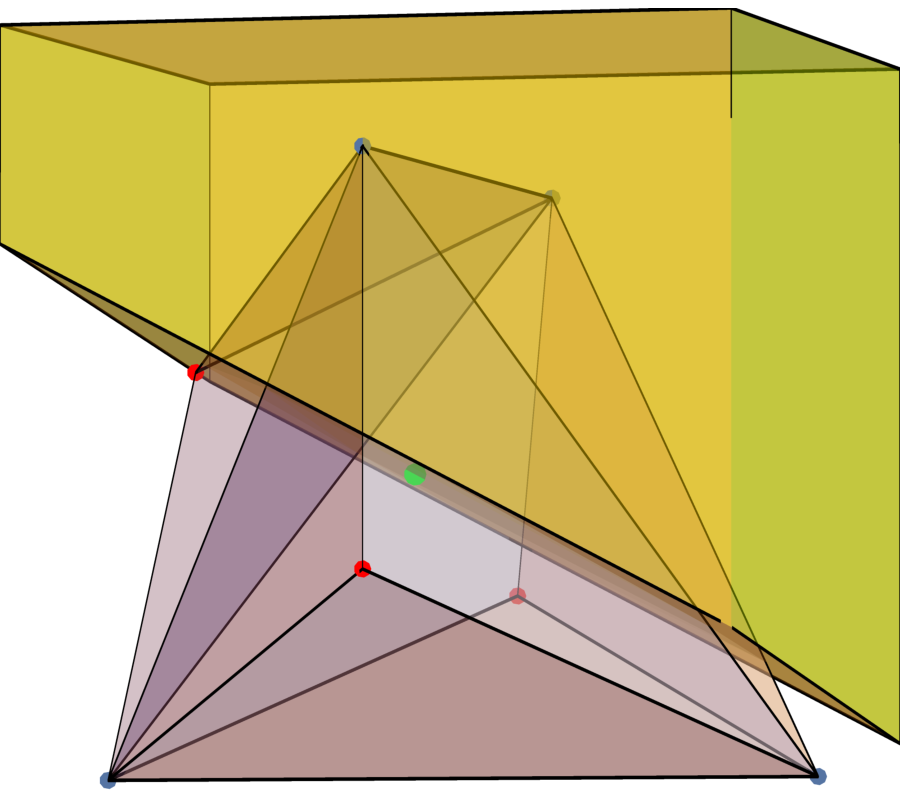}
\caption{Example~\ref{example:medians}. Top left hand panel: Convex hull of the sample points in case $k=3$ (outer polyhedron) and the full-dimensional median set $\median$ (inner coloured polyhedron). Top right hand panel: Convex hull of the sample points in case $k=1$ (outer polyhedron), the median line segment (thick green line segment between the pair of the inner coloured points), and three planes, each separating two sample points from the median set (yellow planes). Bottom panels: Convex hull of the sample points (coloured polyhedron) with the single median (green point). The halfspace in the right hand panel is one minimising halfspace of the median $z$ containing $3$ sample points (the green one and two blue ones). For interactive visualisations see the supplementary \proglang{Mathematica} script.}
\label{figure:medians}
\end{figure}

\begin{example}	\label{example:medians}
For $\nu\in\Meas[\R^3]$ the standard Gaussian probability measure and $X_1, \dots, X_8$ a random sample from $\nu$ with empirical measure $\mu \in \Atom[\R^3]$, the median set $\median$ is of dimension $3$, $1$, or a single-point set, all with positive probability. The claim follows by considering three setups of eight points $x_1, \dots, x_8$ in the space $\R^3$. Denote $k = \dim\left(\median\right)$ and write $\mu \in \Atom[\R^3]$ for the empirical measure of $x_1, \dots, x_8$. The direct computations described below are based on the analysis performed using the \proglang{R} package \pkg{TukeyRegion} \cite{R_TukeyRegion} for evaluation of full-dimensional central regions, and the \textsf{Mathematica} visualisations provided in the script in the online Supplementary Material. Plots of the three setups below are displayed in Figure~\ref{figure:medians}. 
	\begin{itemize}
	\item \textbf{Case $k=3$.} This situation is standard and common. For example, direct computation shows that already for randomly perturbed vertices of a unit cube in $\R^3$, i.e. points in a configuration where the convex hull of $x_1, \dots, x_8$ contains all the eight points on its boundary, possess a full-dimensional polyhedral median set with maximum depth $2/8$.
	\item \textbf{Case $k=1$.} Arrange the points so that $x_1, x_2, x_3$ form vertices of a triangle $T_1$ in a plane, and $x_4, x_5, x_6$ form vertices of a triangle $T_2$ in a plane parallel to that determined by $T_1$, so that the convex hull of $x_1, \dots, x_6$ is a triangular prism in $\R^3$. To obtain points in general position, we perturb the six points slightly. Direct computation shows that for these six points, the halfspace median set is a three-dimensional polyhedron $M$ inside the prism that does not intersect $T_1$ or $T_2$, of points with depth $2/6$. Place the last two points $x_7$ and $x_8$ in the interior of $M$, so that the straight line $l(x_7, x_8)$ between these points intersects both relative interiors of $T_1$ and $T_2$. Note that certainly $\depth{x_7}{\mu} = \depth{x_8}{\mu} = 3/8$, since the two points were placed inside $M$. No point can have depth $4/8$, as in that situation the setup would exhibit halfspace symmetry which is clearly impossible \cite[Proposition~1]{Liu_etal2020}. Finally, projecting all points of $\mu$ into the plane orthogonal to $l(x_7,x_8)$ shows that any point $y \notin l(x_7,x_8)$ can be separated from $l(x_7,x_8)$ by a plane that is parallel to $l(x_7,x_8)$ and contains only two sample points, meaning that $\depth{y}{\mu} \leq 2/8$. The median set of $\mu$ is therefore the line segment between $x_7$ and $x_8$, with depth $3/8$.
	\item \textbf{Case $k=0$.} Consider four points $x_1, \dots, x_4$ forming the vertices of a tetrahedron $T$ (blue points in the bottom panels of Figure~\ref{figure:medians}). Three points $x_5, x_6, x_7 \notin T$ are attached to three different facets of $T$ so that each of these points together with its facet forms another (non-regular) tetrahedron not intersecting $\intr{T}$ (red points in the bottom panels of Figure~\ref{figure:medians}). Finally, a single point $x_8$ is placed strategically inside $T$ into the full-dimensional halfspace median of $x_1, \dots, x_7$. An example is the configuration
		\[	
		\begin{aligned}
		x_1 & = \left(1, 0, -\frac{1}{\sqrt{2}}\right), & x_2 & = \left(-1, 0, -\frac{1}{\sqrt{2}}\right), & x_3 & = \left(0, -1, \frac{1}{\sqrt{2}}\right), & x_4 & = \left(0, 1, \frac{1}{\sqrt{2}}\right), \\
		x_5 & = \left(0, 1, -\frac{1}{4}\right), & x_6 & = \left(\frac{1}{10}, -1, -\frac{1}{4}\right), & x_7 & = \left(\frac{3}{4}, 0, \frac{1}{4}\right), & x_8 & = \left(\frac{1}{10}, \frac{1}{10}, 0\right).
		\end{aligned}
		\]
	These points are in general position in $\R^3$. For the setup of halfspaces $H_{x_8,u_i} \in \half(x_8)$ given by the normal vectors
		\[	u_1 = \left(-\frac{7}{10}, -\frac{3}{10}, -\frac{3}{5}\right), \ u_2 = \left(-\frac{2}{5}, -\frac{1}{10}, \frac{9}{10}\right), \ u_3 = \left(-\frac{1}{5}, \frac{4}{5}, \frac{3}{5}\right), \ u_4 = \left(1, \frac{1}{10}, 0\right)	\]
	we obtain $\mu(H_{x_8,u_i}) = \depth{x_8}{\mu} = 3/8$ for each $i = 1,2,3,4$. At the same time, the union of the open halfspaces $\intr{H_{x_8,u_i}}$ is $\R^3$, meaning that for any $y \ne x_8$ we can find $i$ with $y \in \intr{H_{x_8,u_i}}$, and the shifted closed halfspace $H_{y,u_i} = H_{x_8,u_i} + (y-x_8) \in \half(y)$ necessarily contains at most two atoms of $\mu$. Thus, $\depth{y}{\mu} \leq 2/8$ and the point $x_8$ is the single halfspace median of $\mu$.
	\end{itemize}
The medians in all three cases above are stable in the sense that for a small perturbation of all the sample points, the dimension of the median set remains unchanged. Thus, in each setup and for each $x_i$ we can find a small open ball around $x_i$ such that if $x_i$ is replaced by any element of this ball, the dimension of the new median remains the same. In conclusion, all three cases $k=0,1,3$ occur with positive probability if $x_1, \dots, x_8$ are sampled from any distribution in $\R^3$ with positive density everywhere.\footnote{Note that our example for case $k=1$ happens to disagree with \cite[Theorem~3]{Liu_etal2020}, as for $n = 8$ and $d = 3$ we obtain the maximum depth $\lfloor (n-d+2)/2 \rfloor/n = 3/8$, but the median set is not a single point set. The problem appears to stem from formula (8) in \cite{Liu_etal2020} that is not valid in general.}
\end{example}

\subsection{Computation of the halfspace median in \texorpdfstring{$\R^2$}{R2}}  \label{section:d2}

In dimension $d=2$, Theorem~\ref{theorem:median dimension} leaves only trivial cases: the halfspace median must be either full-dimensional or a singleton, and both situations may occur. But, as we show in our last result below, if $\mu\in\Atom[\R^2]$ has a unique median and $n \ne 4$, then the median must be one of the data points. The case of $n = 4$ data points is trivial and not interesting.\footnote{In the situation $n=4$ and under the assumptions of Lemma~\ref{random samples}, the unique median is almost surely a singleton and is \begin{enumerate*}[label=(\roman*)] \item either the atom contained in the interior of the convex hull of the remaining three sample points; or \item not an atom, but the single point of intersection of the two diagonals of the quadrilateral formed by the convex hull of the atoms.\end{enumerate*}}

\begin{theorem} \label{dimension two} 
Let $\mu\in\Atom[\R^2]$ be an empirical measure with precisely $n$ atoms of mass $1/n$, with $n \ne 4$, that satisfy conditions~\ref{general position} and~\ref{lines intersection} from Lemma~\ref{random samples}. If the halfspace median $\median$ is a single point set, then it must be an atom of $\mu$. In particular, the median set is either full-dimensional, or an atom of $\mu$.
\end{theorem}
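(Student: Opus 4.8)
The plan is to combine the structural dichotomy supplied by Theorem~\ref{theorem:median dimension} with a perturbation argument phrased in the language of flag halfspaces. First, since $n \neq d = 2$, Theorem~\ref{theorem:median dimension} rules out $\dim(\median) = 1$, so $\median$ is either full-dimensional or a single point; the ``in particular'' clause then follows once the first assertion is proved. Accordingly, I would argue the first assertion by contradiction: assume $\median = \{z\}$ with $z$ not an atom of $\mu$, and derive that necessarily $n = 4$, which is excluded by hypothesis.

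The heart of the argument is a local analysis at $z$ built on Corollary~\ref{cor:boundary depth} and Theorem~\ref{theorem:Pokorny}. Writing $D(w) = \mu(\intr{H_{z,w}}) + \depth{z}{\mu|_{\bd{H_{z,w}}}}$ for a unit normal $w \in \Sph[2]$, Corollary~\ref{cor:boundary depth} gives $\am = \min_{w} D(w)$, while Theorem~\ref{theorem:Pokorny} guarantees that the minimum is attained by a minimising flag halfspace whose open halfplane has normal $w$. Since $z$ is not an atom and the atoms are in general position, the boundary term $\depth{z}{\mu|_{\bd{H_{z,w}}}}$ equals $1/n$ precisely when the line $\bd{H_{z,w}}$ carries two atoms lying on opposite sides of $z$, and is $0$ otherwise. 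The first key step is a perturbation lemma: if $\median = \{z\}$, then the normals of the minimising flag halfspaces cannot lie in any closed halfplane $\{w : \langle u, w\rangle \le 0\}$, i.e. they positively span $\R^2$. Indeed, if they all satisfied $\langle u, w\rangle \le 0$, then translating $z$ to $z + tu$ would, for small $t > 0$, not decrease the $\mu$-mass of any minimising halfspace (these only grow under the translation), while the finitely many non-minimising directions keep mass above $\am$; hence $z + tu$ would also be a median, contradicting that $\median$ is a singleton. Here flag halfspaces are essential: because the infimum in \eqref{halfspace depth} need not be attained by closed halfspaces, the clean monotonicity needed for this step becomes available only after passing to the flag formulation via Corollary~\ref{cor:boundary depth}.

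From positive spanning I would extract three minimising flag halfspaces $F_1, F_2, F_3$ with normals $w_1, w_2, w_3$ satisfying $0 \in \intr{\convhullOp\{w_1,w_2,w_3\}}$; their open halfplanes $\intr{H_{z,w_i}}$ then cover $\R^2 \setminus \{z\}$, so every atom lies in at least one, and, dually, in at most two of them. The remaining work is a counting argument constrained by general position. A minimising direction whose boundary line carries no atom can always be ``rotated off'' by a further translation of $z$, so the locking of $z$ from every side forces the presence of \emph{special} minimising directions, whose boundary lines are atom-pair lines through $z$ separating their two atoms. Condition~\ref{lines intersection} limits the number of atom-pair lines through the single point $z$ to at most two, and condition~\ref{general position} caps each such line at two atoms. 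Tracking the flag masses $\mu(F_i) = \mu(\intr{H_{z,w_i}}) + \depth{z}{\mu|_{\bd{H_{z,w_i}}}} = \am$ together with the covering and overlap bounds on how atoms distribute among the $\intr{H_{z,w_i}}$, I would show that the configuration is rigid: the only way to block every translation direction while keeping $\median$ a single non-atom point is that $z$ is the intersection of exactly two atom-pair lines carrying four atoms in total, with no further atoms present---any additional atom would sit in the interior of one of the covering halfplanes and could be used to construct a translation of $z$ preserving maximal depth. This is exactly the configuration of two crossing diagonals, i.e. $n = 4$, contradicting $n \neq 4$.

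I expect the perturbation lemma and this final rigidity count to be the two main obstacles. The delicate point in the perturbation step is the non-attainment of the infimum by closed halfspaces, which is precisely what the flag-halfspace machinery of Theorem~\ref{theorem:Pokorny} and Corollary~\ref{cor:boundary depth} is designed to neutralise; one must verify that, as $z$ is moved, no boundary-ray atom is silently lost in a way that would drop the depth below $\am$, which is why the special directions have to be isolated and handled separately. The hardest part is the concluding count: one must exclude every configuration with $n \ge 5$ by exhibiting, from any ``extra'' atom, an admissible translation direction along which maximal depth is retained, thereby contradicting the assumption that $\median$ is a single point.
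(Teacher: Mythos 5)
You have the right skeleton---argue by contradiction that a singleton non-atom median forces $n=4$, invoke Theorem~\ref{theorem:median dimension} for the ``in particular'' clause, and run a directional analysis at the median combined with conditions~\ref{general position} and~\ref{lines intersection}---and this is in fact the skeleton of the paper's proof. But the two steps carrying all the weight are not proved, and as stated both contain errors. First, the perturbation lemma. Your justification is that masses of minimising halfspaces ``only grow under the translation, while the finitely many non-minimising directions keep mass above $\am$''. There are uncountably many directions, not finitely many, and the drop in flag mass under the translation $z\mapsto z+tu$ is not controlled uniformly in the direction: for a fixed $t>0$, a direction $w$ with $\langle u,w\rangle>0$ whose boundary line passes close to several atoms can lose all of them at once, so a direction far from minimising at $z$ can become minimising at $z+tu$. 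Making this uniform is exactly the compactness step of the paper: take $t_i\downarrow 0$, minimising halfspaces $H_{z+t_iu,w_i}$ of mass at most $\am-1/n$, extract a convergent subsequence $w_i\to w^*$, bound $\mu\left(\intr{H_{z,w^*}}\right)$ via Fatou, and handle the boundary line via Corollary~\ref{cor:boundary depth}. Moreover, your closed-halfplane hypothesis admits the case $\langle u,w^*\rangle=0$, in which the closed halfspace is unchanged by the translation but the flag is not; there the contradiction comes only from noticing that the atom of $\bd{H_{z,w^*}}$ on the $+u$ side of $z$ is recaptured by $H_{z+t_iu,w_i}$ --- precisely the ``boundary-ray atom silently lost'' issue you acknowledge but do not resolve. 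Once resolved, your lemma is the paper's observation~\eqref{eq: observation}, so this route is not a shortcut.

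Second, the reduction to atom-pair lines and the final count. Positive spanning of \emph{minimising} flag normals is too weak: a minimising flag normal need not carry any atom on its boundary line (with atoms $(\pm 1,0)$, $(0,\pm 1)$ and median the origin, every diagonal direction is minimising with an atomless boundary line), and your ``rotated off'' argument for discarding such directions is not given. The paper sidesteps this by building the special structure --- $\mu\left(\intr{H_{0,w}}\right)=\am-1/n$ and $\mu\left(\bd{H_{0,w}}\right)=2/n$, with the two boundary atoms separated by the median --- directly into the limit direction $w(v)$, for \emph{every} $v\in\SphO$. Finally, your rigidity count rests on the unproved claim that an extra atom yields a translation of $z$ preserving depth $\am$; the actual mechanism is different. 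The paper splits into two cases: if the special directions contain no antipodal pair, one produces three distinct atom-pair lines through $z$, violating condition~\ref{lines intersection}; if an antipodal pair $\{u,-u\}$ occurs, summing the masses $\mu\left(\intr{H_{0,u}}\right)+\mu\left(\intr{H_{0,-u}}\right)+\mu\left(\bd{H_{0,u}}\right)=1$ gives $\am=1/2$, and only with $\am=1/2$ in hand can one exclude unpaired atoms: a line through $z$ carrying a single atom yields a flag halfspace of mass at most $1/2-1/(2n)<\am$, a contradiction. Hence every atom is paired through $z$, condition~\ref{lines intersection} allows at most two such lines, so $n\leq 4$, and $n=2$ is excluded because its median is a segment. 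Without deriving $\am=1/2$, your count cannot exclude a fifth, unpaired atom.
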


\begin{proof}
Suppose without loss of generality that $x=0 \in \R^2$ is the unique median of $\mu$. Assume, for a contradiction, that $\mu(\{0\})=0$. We start with the following observation: For every $v\in \SphO$ there is $w(v)\in\SphO$ that meets the following conditions
	\begin{equation}\label{eq: observation}
	\langle v, w(v) \rangle\geq 0,\ \mu\left(\intr{H_{0,w(v)}}\right)= \am-\frac{1}{n},\mbox{ and }\mu\left(\bd{H_{0,w(v)}}\right)= 2/n.
	\end{equation}
To prove the existence of $w = w(v)$ satisfying \eqref{eq: observation} pick a real sequence $a_i\downarrow 0$ and note that for every $i=1,2,\dots$ we have $a_i v\notin \median$, so there is $w_i\in\SphO$ such that 
	\begin{equation}\label{e1}
	\mu(H_{a_i v,w_i})=\depth{a_i v}{\mu}\leq\am-1/n<\depth{0}{\mu}.
	\end{equation}
The existence of a minimising halfspace $H_{a_i v,w_i} \in \half(a_i v)$ follows from the fact that minimising halfspaces always exist for $\mu\in\Atom$, as we observed in Section~\ref{section:flag halfspace}. Then necessarily $0\not\in H_{a_i v,w_i}$, meaning that $\langle v, w_i \rangle>0$. The sequence $\{w_i\}_{i=1}^\infty\subset \SphO$ is bounded and therefore contains a convergent subsequence $\{w_{i_j}\}_{j=1}^\infty$ with a limit point $w\in\SphO$ that satisfies $\langle v, w\rangle\geq 0$. By the Fatou lemma \cite[Lemma~4.3.3]{Dudley2002} applied to the sets $\intr{H_{0,w}} \subseteq {\lim\inf}_{j\to\infty} \intr{H_{a_{i_j} v,w_{i_j}}}$ and \eqref{e1} we have that 
	\begin{equation}\label{first inequality}
	\mu(\intr{H_{0,w}})\leq\liminf_{j\to\infty} \mu\left(\intr{H_{a_{i_j} v,w_{i_j}}}\right) \leq \am-1/n,
	\end{equation}
 which together with Corollary~\ref{cor:boundary depth} gives us 
	\[	\am=\depth{0}{\mu}\leq \mu(\intr{H_{0,w}})+\depth{0}{\mu|_{\bd{H_{0,w}}}}\leq \am-\frac{1}{n}+\depth{0}{\mu|_{\bd{H_{0,w}}}}.	\]
Therefore, $\depth{0}{\mu|_{\bd{H_{0,w}}}}\geq 1/n$. Because the halfspace median $0$ is not an atom of $\mu$, the condition of general position of the atoms of $\mu$ from part~\ref{general position} of Lemma~\ref{random samples} implies that the straight line $\bd{H_{0,w}}$ contains exactly two atoms of $\mu$ at some points $y,z\in\bd{H_{0,w}}$ such that $0$ is contained in the relatively open line segment formed by $y$ and $z$. Denote by $l_y\subset\bd{H_{0,w}}$ the open halfline centred at $0$ that contains $y$. The flag halfspace $F=\{y\}\cup l_y \cup \intr{H_{0,w}}\in\flag(0)$ then satisfies $\mu(F)=\mu\left(\intr{H_{0,w}}\right)+1/n$. Inequality \eqref{first inequality} implies $\mu(F)\leq \am$, so it must be $\mu(F)=\am$ because of Theorem~\ref{theorem:Pokorny}. Consequently, $\mu\left(\intr{H_{0,w}}\right)=\am-1/n$ and we may take $w(v)=w$. We have proved \eqref{eq: observation}.

Pick any $v\in\SphO$. There exists $u=w(v)\in\SphO$ that satisfies \eqref{eq: observation}. Using the same observation again, we are able to find $u'=w(-u)\in \SphO$ that satisfies \eqref{eq: observation} for $v$ replaced by $-u$. We consider two different cases. 

\textbf{First case:} $u'=-u$. By summing up equalities $\mu\left(\intr{H_{0,u}}\right)= \am-1/n$, $\mu\left(\intr{H_{0,-u}}\right)= \am-1/n$ and $\mu\left(\bd{H_{0,u}}\right)=2/n$ that all follow from \eqref{eq: observation}, we obtain $\am= 1/2$. Consider any infinite line $l$ that passes through the origin and the two open halfplanes $G^+$ and $G^-$ determined by $l$. If $\mu(l)=1/n$, then one of the open halfplanes $G^+$ and $G^-$ is of $\mu$-mass at most $1/2-1/(2n)$. Assume that $\mu(G^+)\leq 1/2-1/(2n)$. Because $l$ contains only one atom of $\mu$ that is not at the origin, there is a flag halfspace $F\in \flag(0)$ composed of $G^+$ and the relatively closed halfline in $l$ starting at $0$ that does not contain atoms. Then $\mu(F)=\mu(G^+)\leq 1/2-1/(2n)<\am$, a contradiction with $\mu(l) = 1/n$. Due to the assumption of general position of atoms from part~\ref{general position} of Lemma~\ref{random samples}, we know that $\mu(l)\leq 2/n$, so $\mu(l)$ can take only one of the two possible values: either $0$ or $2/n$. Because of our assumption from part~\ref{lines intersection} of Lemma~\ref{random samples}, there however cannot be three different lines determined by pairs of sample points that all intersect in the origin. This means that for only at most two lines $l$ in $\R^2$ passing through the origin, the $\mu$-mass of $l$ can be $2/n$; all the other lines that we now consider have null $\mu$-mass (given that we have already excluded the case $\mu(l) = 1/n$). This leaves only two possibilities: either $n=2$, or $n=4$. If $n=2$, then the median set $\median$ is the line segment determined by the only two atoms of $\mu$, and therefore it is one-dimensional. Only the case $n=4$, not covered by the statement of this theorem, remains.

\textbf{Second case:} $u'\neq-u$. There exists a closed halfspace $H_{0,v'}$ whose boundary passes through the origin that does not contain any of the points $u$ and $u'$. Let $\tilde{u}=w(v')$ be the unit vector that satisfies \eqref{eq: observation} with $v=v'$. Directly by \eqref{eq: observation}, each of the three different lines $\bd{H_{0,u}}$, $\bd{H_{0,u'}}$ and $\bd{H_{0,\tilde{u}}}$ contains two atoms of $\mu$, a contradiction with our assumption from part~\ref{lines intersection} of Lemma~\ref{random samples}.

The last part of the statement of Theorem~\ref{dimension two} follows directly from Theorem~\ref{theorem:median dimension}.
\end{proof}

Theorems~\ref{theorem:median dimension} and~\ref{dimension two} fully justify the algorithmic procedure from \cite{Liu_etal2019} and \cite{Fojtik_etal2022} for finding the halfspace medians of samples from smooth probability distributions in $\R^2$. If the median set is full-dimensional, the algorithm from \cite{Liu_etal2019} implemented in the \proglang{R} package \pkg{TukeyRegion} \cite{R_TukeyRegion} finds the median set exactly, as proved in \cite{Fojtik_etal2022}. If the median is not full-dimensional, we conclude that it has to be a single sample point, and evaluation of the maximum halfspace depth of all sample points gives the unique halfspace median. 

In dimension $d > 2$, the situation with possible less-than-full-dimensional halfspace medians appears to be much more convoluted, as demonstrated already in Example~\ref{example:medians}. Our proof technique from Theorem~\ref{dimension two} does not extend directly to $d>2$. One might, however, conjecture that in accordance with Theorem~\ref{dimension two}, a less-than-full-dimensional median of a dataset in general position must contain at least one atom of $\mu\in\Atom$. Our final example shows that this is not true: a configuration of points in general position without an atom in the halfspace median set is indeed possible. 

\begin{example}\label{line segment}
Consider a dataset of $n = 8$ points in $\R^3$ given by
	\[	
	\begin{aligned}
	x_1 & = \left(-1, \frac{1}{3}, -\frac{2}{3}\right), & x_2 & = \bigg(1, 0, -1\bigg), & x_3 & = \left(0, \frac{3}{2}, -1\right), & x_4 & = \left(-\frac{1}{2}, 0, 1\right), \\
	x_5 & = \left(1, 0, \frac{4}{3}\right), & x_6 & = \bigg(0, 2, 1\bigg), & x_7 & = \left(-\frac{1}{3}, \frac{1}{2}, -2\right), & x_8 & = \left(\frac{1}{3}, \frac{1}{2}, 2\right).
	\end{aligned}
	\]
Similarly as in case $k=1$ in Example~\ref{example:medians}, points $x_1, \dots, x_6$ are perturbed vertices of a triangular prism. Points $x_7$ and $x_8$ determine a line segment that passes through both triangular bases of that prism. The dataset is in general position. A direct computation performed in \proglang{Mathematica}, provided in the script in the online Supplementary Material, confirms that the sample halfspace median set of this dataset attains depth $3/8$, and the median set consists of the line segment between points $\left(0,1/2,0\right)$ and $\left(3/44, 1/2, 9/22\right)$. This median line segment lies strictly in the relative interior of the straight line between points $x_7$ and $x_8$, and does not contain any atoms of the corresponding measure $\mu \in \Atom[\R^3]$. Thus, it is possible that in dimension $d>2$, a less-than-full-dimensional median set contains no data points. For a visualisation of our dataset and its median set see Figure~\ref{figure:line segment}.
\end{example}

\begin{figure}[htpb]
\includegraphics[width=\twofig\textwidth]{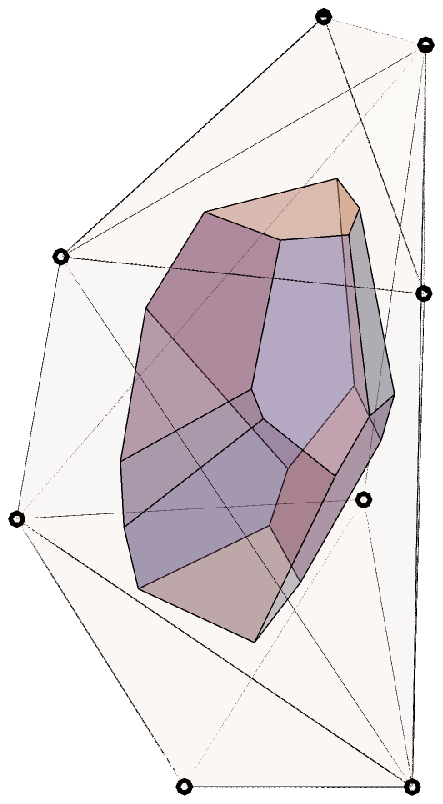} \quad \includegraphics[width=\twofig\textwidth]{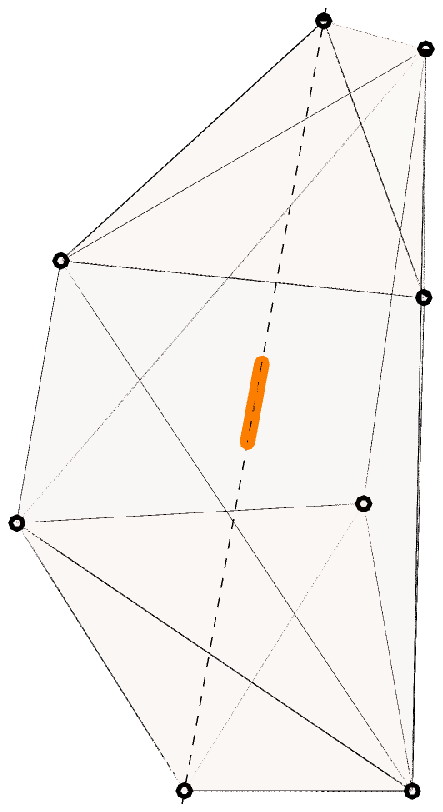}
\caption{Example~\ref{line segment}. Left hand panel: Convex hull of the sample points (outer polyhedron) and the full-dimensional central region $\Damu$ with $\alpha = 2/8$ (inner coloured polyhedron). Right hand panel: Convex hull of the sample points (outer polyhedron), the median line segment of depth $\am = 3/8$ (thick orange line segment), and the line between data points $x_7$ and $x_8$ (dashed black line). The halfspace median line segment forms a piece of the dashed black line, and is contained inside the region $D_{2/8}(\mu)$ from the left hand panel. For interactive visualisations see the supplementary \proglang{Mathematica} script.}
\label{figure:line segment}
\end{figure}

In Example~\ref{line segment}, we constructed a dataset in general position in dimension $d=3$, with a one-dimensional median set. We do not know an example of a dataset sampled from a distribution with a density in $\R^d$, $d>2$, with a unique (zero-dimensional) halfspace median that is not a data point. The higher dimensional situation therefore deserves further investigation. % The problem of possible extension of Theorem~\ref{dimension two} to higher dimensions is part of our ongoing research.

\subsection*{Acknowledgement}
P.~Laketa was supported by the OP RDE project ``International mobility of research, technical and administrative staff at the Charles University", grant CZ.02.2.69/0.0/0.0/18\_053/0016976. The work of S.~Nagy was supported by Czech Science Foundation (EXPRO project n. 19-28231X).

%\bibliographystyle{apalike}
%\bibliography{allpapers}
\end{document}